\newcommand{\la}{\lambda}
\def\s{\sigma}
\def\l{{\lambda}}
\def\CC{{\cal C}}
\def\th{\theta}
\newtheorem{oss}{Remark}
\newtheorem{lem}{Lemma}
\newtheorem{prop}{Proposition}
\newtheorem{defi}{Definition}
\newtheorem*{proof*}{Proof}
\def\be{\begin{equation}}
\def\ee{\end{equation}}
\def\bea{\begin{eqnarray}}
\def\eea{\end{eqnarray}}
\def\nn{\nonumber}
\numberwithin{equation}{section}
\numberwithin{thm}{section}
\begin{document}

\title{Equilibria of a clamped Euler beam (\textit{Elastica}) with distributed load: large deformations.}

\author{Della Corte A.$^{1,2}$, dell'Isola F.$^{1,3}$, Esposito R.$^{1}$ and Pulvirenti M.$^{1,4}$}

\date{}

\maketitle

\small{\noindent 1: International Research Center on the Mathematics and Mechanics of Complex Systems M\&MoCS, University of L'Aquila, Italy.

\noindent 2: Department of Mechanical and Aerospace Engineering, Sapienza University of Rome, Italy. 

\noindent 3: Department of Structural and Geotechnical Engineering, Sapienza University of Rome, Italy.

\noindent 4: Department of Mathematics, Sapienza University of Rome, Italy.}

\begin{abstract}
\noindent We present some novel equilibrium shapes of a clamped Euler beam (\textit{Elastica} from now on) under uniformly distributed dead load orthogonal to the straight reference configuration. We characterize the properties of the minimizers of total energy, determine the corresponding Euler-Lagrange conditions and prove, by means of direct methods of calculus of variations, the existence of curled local minimizers. Moreover, we prove some sufficient conditions for stability and instability of  solutions of the Euler-Lagrange, that can be applied to numerically found curled shapes.
\\

\noindent KEYWORDS: Clamped \textit{Elastica}; Uniformly Distributed Load; Large Deformations; Equilibrium and stability. 

\end{abstract}

\section{Introduction}

In the present paper we consider the planar configurations of an inextensible \textit{Elastica}\begin{footnote}[1] {Euler, 1744 \cite{Euler1}}\end{footnote} of length $L$, clamped at one of its ends, which is assumed to be the origin of the reference curvilinear abscissa $s$. The end point corresponding to $s=L$ is assumed to be free. 
\noindent Let us assume that $\mathcal{E}^2$ is the affine euclidean plane including the placements of the \textit{Elastica}. In $\mathcal{E}^2$ we introduce a coordinate system whose origin $\mathcal{O}$ coincides with the reference placement of the clamped end of the \textit{Elastica}, and whose basis $(\boldsymbol{D}_1, \boldsymbol{D}_2)$ is orthonormal. $\boldsymbol{D}_1$ is the tangent unit vector to the reference shape of the \textit{Elastica} (assumed to be straight when undeformed) and coincides with the imposed clamping direction at the clamped end. 

\subsection{Some considerations on the history of the problem}

The theory of large deformations of \textit{Elasticae} was fully formulated by Euler already in 1744 \cite{Euler1}, with the important contribution of the ideas of Daniel \cite{DBernoulli} and James Bernoulli \cite{JBernoulli}.
One important tool for finding the equilibrium forms of \textit{Elasticae} was developed by Lagrange \cite{Lagrange}, who formulated the stationarity condition for their total energy.  The relative boundary value problems for the resulting ordinary differential equations have been since then extensively used for determining the equilibrium shapes. Without even trying to provide a complete bibliography, it is worth mentioning that large deformations of the \textit{Elastica} have been studied by many other great scientists (including for instance Max Born in his Doctoral Thesis \cite{Born}).  Euler's beam model, in the inextensible case considered herein, has been validated by means of rigorous derivation from 3D elasticity \cite{Mora,Pideri2} and systematically and thoroughly used in structural mechanics.

\noindent The theory of \textit{Elastica}, indeed, still represents the basis for many more complex problems of structural mechanics (as for instance the behavior of the fibrous structures described in~\cite{Rivlin,Pipkin}). In the literature we could not find rigorous results on the study of the whole set of equilibrium shapes and of their stability for an \textit{Elastica} in large deformation under a uniformly distributed load. There were, however, some suggestive numerical results that deserved, in the opinion of the authors, greater attention than they received. Namely, in \cite{Faulkner,Raboud} some numerical results producing equilibrium shapes similar to the ones we will present later are shown; in these papers the solutions of the boundary value problem are obtained by means of a shooting technique. In the older paper \cite{Fried}, some numerically evaluated stability results on shapes similar to those considered herein are presented in case of concentrated end load. The cited papers appeared when the first effective numerical codes capable to deal with nonlinear boundary value problems became available. In subsequent investigations, while the numerical methods were more and more developed, the mechanical problems related to large deformations of beams, with the noticeable exception of problems with concentrated end load (see \cite{Antman} for a rational bibliography on the problem), seem to have somewhat escaped rigorous treatment. 

\noindent In the recent past, however, the awareness of the importance of large deformations problems in structural mechanics came back in both theoretical~\cite{Drucker,Fertis,Forest,Abali} and computational \cite{Ladeveze} directions, and this importance will probably increase whether further substantial progresses will be achieved, in particular since large deformations play a relevant role in topical research lines as the design of metamaterials~\cite{RoyalIVAN,Milton,AMR,Pideri}. An interesting research stream in the design of novel metamaterials involved the so-called pantographic structures \cite{Rizzi,Turco,ZAMP}, which were first conceived in order to synthesize a particular class of generalized continua, \textit{i.e.} those described by second gradient energy models \cite{Mindlin,Germain}. Exactly as in many other conceivable metamaterials, pantographic ones base their exotic behavior on the particular geometrical and mechanical micro-structure of beam lattices \cite{Alibert} and on the deformation energy localization allowed by the onset of large deformations in portion of  beams located in some specific areas of the lattice structure. It is therefore clear that, if one is interested in designing and optimizing pantographic metamaterials, a reasonably complete knowledge of the behavior of beams in large deformations is needed. The authors were indeed pushed in the present research direction exactly pursuing the aforementioned aims.

\subsection{Setting of the problem}

\noindent The configurations of the system are curves $$\boldsymbol{\chi}: s \in [0,L] \rightarrow \boldsymbol{\chi}(s) \in \mathcal{E}^2=\chi_1(s)\boldsymbol{D}_1+\chi_2(s)\boldsymbol{D}_2$$ where $\mathcal{E}^2$ is the Euclidean plane including the placement of the \textit{Elastica}.
We assume that the reference configuration is given by $$\boldsymbol{\chi}_0:s \in [0,L] \rightarrow \boldsymbol{\chi}_0 (s)=s\boldsymbol{D}_1\in\mathcal{E}^2$$ The considered \textit{Elastica} is subject to an inextensibility constraint, so that placements verify the local condition (denoting with apexes differentiation with respect to the reference abscissa): 
\be
\label{inext}
\boldsymbol{\chi}' (s)\cdot \boldsymbol{\chi}' (s)=1 \quad \text {for all} \quad s\in [0,L].
\ee
The elastic energy of the inextensible \textit{Elastica}, as originally assumed by Bernoulli and Euler, is given by the quadratic form
$$
\frac 12 \int_0^L k_M \boldsymbol{\chi}'' (s)\cdot \boldsymbol{\chi}'' (s)ds=\frac{1}{2}\int_0^Lk_M\kappa^2(s)ds,
$$
where $k_M$ is the bending stiffness of the \textit{Elastica} (we assume $k_M'(s)=0$) and $\kappa(s)$ is the geometrical curvature of the actual shape of the \textit{Elastica} (because of the inextensibility condition).  

\noindent Finally we will suppose that the \textit{Elastica} is subject to the uniformly distributed dead load (like gravitational force) given by $\textbf{b}(s)=b\boldsymbol{D}_2$, with $b>0$, whose associated energy is
$-b\int_0^L \chi_2 (s) ds$. Therefore the equilibria of the system are the stationary points of the energy functional
\be
\label{energy}
E(\chi(\cdot))= \frac 12 \int_0^L \boldsymbol{\chi}'' (s)\cdot \boldsymbol{\chi}'' (s) ds- b \int_0^L  \chi_2 (s) ds,
\ee
with the additional condition \eqref{inext}. The left boundary conditions are $\boldsymbol{\chi}(0)=\boldsymbol{\chi}'(0)=\mathbf{D}_1$, while the right boundary conditions (at $s=L$) are free.

In this paper we study the equilibrium configurations of the system, namely the stationary points of $E$.

We treat the constraint \eqref{inext} by introducing a different configuration field verifying the inextensiblity condition automatically. We define the angle $\theta$, counted in the counter-clockwise sense by the position
\be
\boldsymbol{\chi}'(s)=\cos\theta(s) \boldsymbol{D}_1+\sin\theta(s) \boldsymbol{D}_2.
\ee
Once the scalar field $\theta(s)$ is known, together with the clamping condition  $\boldsymbol{\chi}(0)=\mathcal{O}$, the placement $\boldsymbol{\chi}(s)$ is uniquely determined by integration. The condition imposing the clamping direction is equivalent to $\theta(0)=0$.
\subsection{Description of the results}
Once formulated the problem which consists in studying the stationary points of the enegy functional expressed in terms of $\th$, we make use of the well known direct method of calculus of variations. The existence of global and local minima of the energy functional is ensured by general theorems based on results first obtained by L. Tonelli \cite{Tonelli,Dacorogna}.

A qualitative analysis of the shape of these stationary points is carried out by means of comparisons based on energy considerations. 

In this way we find two branches of solutions, parametrized by the load parameter. The first, called ``primary branch'', corresponds to global minima of the energy. The solutions, characterized by a positive rotation $\th$, look like equilibria of a trampoline under the action of a positive gravitational field (see Fig. 8 below).

Another branch of solutions, called ``secondary branch'' is a topological extension of a family of solution with a negative angle $\th$ (see Fig. \ref{Stable} below) rotating around the origin (clamping point), which are local minimizers of the energy. These solutions are possible only if the load is sufficiently large or, equivalently, the beam is sufficiently long.


Profiles of the primary and secondary branches are found also numerically by means of a shooting technique. This complements previous qualitative analysis by further quantitative information.

In order to establish a suitable stability chart of these solutions, we note that  the primary branch is obviously stable since it is formed by global minima. As for the secondary branch, the situation is richer. Indeed, by establishing sufficient conditions for the stability/instability of the solutions of the Euler-Lagrange equations associated to our variational problem we are able to show that the secondary branch contains stable solutions and give sufficient conditions for the instability.

\vskip .3cm
\noindent The paper is organized as follows: in Section 2 we reformulate the problem of equilibrium of \textit{Elastica} under uniform load in terms of the rotation field of the cross-section. This allows for a  Lagrangian characterization of \textit{Elastica}'s space of configurations. In Section 3 we provide rigorous results on the characterization of the equilibrium shapes attaining global energy minima when the intensity of externally applied load varies. In Section 4 we prove the existence of a branch of stable equilibrium shapes exhibiting a curling around the clamped extremum. In Section 5 we provide some results on the stability and instability of stationary points verifying some particular properties. In Section 6 we apply these results to numerically found stationary points. In Section 7 we show that the ending part of the curled equilibrium shape coincides (up to a reflection) to the shape of the global minimizer for an \textit{Elastica} of suitably reduced length.

\section{Reformulation of the problem and first considerations}

As already explained, in order to eliminate the constraint \eqref{inext} it is convenient to introduce the new variable $\theta$ that is the angle (counted in the anti-clockwise sense) formed by the
unitary tangent $\boldsymbol{\chi}' (s)$ to the graph of the curve $\boldsymbol{\chi}$ with the $x$- axis. In the new configuration field the energy reads\be
\label{energy1}
E(\theta(\cdot))= \frac 12 \int_0^L |\th' (s)|^2 \text{d}s - b \int_0^L  (L-s) \sin \th (s) \text{d}s.
\ee

\begin{oss} 
\label{remark0}
Since $|\sin\th|\le 1$, it is immediate to see that $E(\th)\ge-\frac{b}{2}$ for all admissible $\th\in L^2([0,L])$. In particular, $E$ is bounded from below.
\end{oss}
\noindent A standard formal computation shows that the Euler-Lagrange conditions
associated to the functional \eqref{energy1} are
\be
\label{1}
\th''=-b (L-s) \cos \th, 
\ee
in the set of admissible functions verifying

\be
\label{bc}
\th(0)=\th'(L)=0.
\ee
Note that, while the condition
\be 
\label{bcbis}
\theta(0)=0, 
\ee
characterizes the admissible kinematics and is given by the problem, the condition at $L$
is a consequence of imposing stationarity, and can be interpreted by saying that at the free end 
the curvature must vanish. 

We remark that also the boundary condition $\th(0)=\pi$ would also describe a clamped beam. This boundary condition would give rise to solutions are specular reflection the solutions with boundary data \eqref{bcbis}. Therefore we will ignore them.

By obvious scaling properties of equation \eqref {1} we see that any solution of such an equation, for a given pair $L$ and $b$,
can be recovered by setting $L=1$ rescaling $b \to bL^3$. Therefore from now on we shall assume $L=1$ leaving
$b$ as the only parameter.

\noindent By \eqref{1} we easily obtain the following integral equation for the unknown field $\th$
\be \label{eqint}
\th(s) =b \int _0^s d\s \int_{\s}^1 (1- \s') \cos \th (\s') d \s''.
\ee
We define the nonlinear integral operator $T$ as 
\be \label{eqint1}
[T\theta](s)=b \int _0^s d\s \int_{\s}^1 (1- \s') \cos \th (\s') d \s' 
\ee
The solutions to \eqref{1} with conditions \eqref{bc} and \eqref{bcbis} are all the fixed points of the map $ \th \to T (\th)$. 

\begin{prop}
Under the smallness assumption
\be\frac {b} 6 <1.\label{smallness}
\ee
the solution of the problem \eqref{1}, \eqref{bc}, \eqref{bcbis} exists and is unique. 
\label{uniqueness_}
\end{prop}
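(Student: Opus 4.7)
The plan is to apply the Banach contraction principle to the nonlinear operator $T$ defined in \eqref{eqint1}, viewed as a self-map of the Banach space $C([0,1])$ equipped with the supremum norm. I would first verify that fixed points of $T$ are exactly the classical solutions of the boundary value problem \eqref{1}, \eqref{bc}, \eqref{bcbis}: the identity $[T\theta](0)=0$ is automatic, while differentiating under the integral sign gives $[T\theta]'(s)=b\int_s^1(1-\sigma')\cos\theta(\sigma')d\sigma'$, from which $[T\theta]'(1)=0$ and $[T\theta]''(s)=-b(1-s)\cos\theta(s)$ follow immediately. Thus every fixed point of $T$ in $C([0,1])$ is a $C^2$ solution of the ODE with the prescribed boundary data, and conversely any such solution is a fixed point.

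Next I would estimate the Lipschitz constant of $T$. For any $\theta_1,\theta_2\in C([0,1])$, using $|\cos\theta_1(\sigma')-\cos\theta_2(\sigma')|\le|\theta_1(\sigma')-\theta_2(\sigma')|\le\|\theta_1-\theta_2\|_\infty$, one gets
\be
|[T\theta_1](s)-[T\theta_2](s)|\le b\,\|\theta_1-\theta_2\|_\infty\int_0^s d\sigma\int_\sigma^1(1-\sigma')d\sigma'.
\ee
The inner integral equals $\tfrac{(1-\sigma)^2}{2}$, and integrating again in $\sigma$ over $[0,s]$ gives $\tfrac{1}{6}\bigl[1-(1-s)^3\bigr]\le\tfrac{1}{6}$ for all $s\in[0,1]$. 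Taking the supremum over $s$ yields
\be
\|T\theta_1-T\theta_2\|_\infty\le\frac{b}{6}\,\|\theta_1-\theta_2\|_\infty.
\ee
Under the hypothesis \eqref{smallness}, this is a strict contraction.

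Since $C([0,1])$ is a complete metric space and $T$ maps it into itself (the integrand is continuous and bounded, so $T\theta$ is continuous for every $\theta\in C([0,1])$), the Banach fixed point theorem applies and produces a unique $\theta\in C([0,1])$ with $T\theta=\theta$. By the equivalence established in the first step, this fixed point is the unique solution of \eqref{1} with boundary conditions \eqref{bc}, \eqref{bcbis}.

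The argument is essentially routine; the only genuine point requiring care is obtaining the sharp constant $1/6$ in the double integral bound, which is precisely what makes the smallness hypothesis $b/6<1$ the natural sufficient condition for the contraction to close. No deeper obstacle is expected, since the nonlinearity enters only through $\cos\theta$, whose global Lipschitz constant is $1$.
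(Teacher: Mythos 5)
Your proof is correct and follows exactly the route the paper takes: the paper's proof is a one-line appeal to the Banach fixed point theorem for the operator $T$, and you have simply supplied the details (the equivalence of fixed points with solutions, the Lipschitz estimate via $|\cos\theta_1-\cos\theta_2|\le\|\theta_1-\theta_2\|_\infty$, and the sharp bound $\tfrac16\bigl[1-(1-s)^3\bigr]\le\tfrac16$) that the authors leave to the reader. Nothing is missing.
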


\begin{proof}
It is easy to check that under the condition \eqref{smallness}, the operator $T$ is a strict contraction and hence the proof follows by  a standard application of the Banach fixed point theorem.
\end{proof}

To understand the behavior of the solution obtained for small $b$,
we compute the solution of the boundary value problem
$$
\th''=-b (1-s), 
$$
$$
\th(0)=\th'(1)=0,
$$
which follows by setting $\cos \th =1$,  which is a reasonable approximation when $b$ is small.
The result is
\be 
\label{bsmall}
\th (s)= \frac {-b}{6} [ (1-s)^3-1],
\ee
which means that $\th$ is strictly positive, increasing and concave. 

\noindent To go further, namely looking for solutions with a large $b$, we investigate the minima of the energy functional \eqref{energy1}.
We first observe that the energy has the form
\be
\label{energy2}
E(\th(\cdot))=\int_0^1 ds \, {\cal L } ( \th', \th, s),
\ee
where
\be
{\cal L } (\theta',\theta, s)=\frac 12 (\theta'^2 -b  (1-s) \sin \theta).
\ee
We define $E$  on the space 
\be
H^1_{0,1}= \{ \th \in H^1 [0,1] |  \th(0)=0 \}.\label{H101}
\ee 

\noindent Note that
$$
p \to {\cal L } ( p,q, s),
$$
is a continuous function which is convex and coercive for all $q$ and $s$. Therefore, by a classical  result in Calculus of Variations (see for instance \cite{DeFonseca}, Section 3.2), we have a minimizer in $H^1_{0,1}$, not necessarily unique.
Such a minimizer, say $\th$, is a solution of \eqref {1} in a weak form \cite{DeFonseca,Evans}, for any value of $b$, namely

\be
\label{weak}
(\varphi', \th')=b (\varphi,  (1-s) \cos \th),
\ee
for all $\varphi \in H^1_{0,1} $ vanishing at the extrema. 

\begin{lem}
\label{lemma1}
If $\theta$ is weak solution of equation \eqref{weak}, then it is in $C^{\infty}$ and is a classical solution of \eqref{1}. In particular, any stationary point of the energy is a classical solution of the problem \eqref{1}, \eqref{bc}.
\end{lem}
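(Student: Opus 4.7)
The plan is a standard bootstrap/ODE-regularity argument in one dimension, followed by the recovery of the natural boundary condition at $s=1$. Since $\th\in H^1_{0,1}$ and in one dimension $H^1([0,1])$ embeds continuously in $C^0([0,1])$, $\th$ is already continuous and hence the right-hand side $f(s):=-b(1-s)\cos\th(s)$ belongs to $C^0([0,1])$. Restricting the test functions in \eqref{weak} to $\varphi\in C^\infty_c((0,1))\subset H^1_{0,1}$, one obtains $\th''=f$ in the sense of distributions on $(0,1)$; since $f\in C^0([0,1])$, this forces $\th'\in C^1$ and hence $\th\in C^2([0,1])$ up to the boundary.

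The bootstrap is then straightforward: if $\th\in C^k([0,1])$, then $\cos\th\in C^k$, hence $f\in C^k$, hence $\th''\in C^k$, and therefore $\th\in C^{k+2}$. Iterating, $\th\in C^\infty([0,1])$, so that the pointwise identity \eqref{1} holds on the whole closed interval.

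For the remaining boundary condition $\th'(1)=0$, note that a stationary point of $E$ in $H^1_{0,1}$ must satisfy the first variation identity \eqref{weak} for every $\varphi\in H^1_{0,1}$ — the only constraint being $\varphi(0)=0$, since the right end is free. With $\th\in C^2$ now established, integration by parts in \eqref{weak} is legitimate and produces
\begin{equation*}
\th'(1)\,\varphi(1)=\int_0^1\bigl(\th''+b(1-s)\cos\th\bigr)\varphi\, ds=0,
\end{equation*}
where the right-hand side vanishes because \eqref{1} has already been established pointwise; choosing $\varphi$ with $\varphi(1)\neq 0$ forces $\th'(1)=0$.

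The whole argument is entirely routine; the only conceptual point worth some care is that the natural boundary condition at the free end is not a consequence of testing \eqref{weak} against functions vanishing at both endpoints (these give only the interior equation), but rather requires admitting test functions with $\varphi(1)\neq 0$, as is appropriate for stationarity in $H^1_{0,1}$. Once this is explicit, no further analytical difficulty arises.
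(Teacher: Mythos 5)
Your argument is correct, and it takes a genuinely different route from the paper's. The paper proves regularity by introducing $\Psi=T\th$ with $T$ the integral operator \eqref{eqint1}, computing $\Psi''=-b(1-s)\cos\th$, and deducing from \eqref{weak} that $(\varphi',\th'-\Psi')=0$, whence $\th=\Psi$; since $T$ maps $H^1$ (indeed $C^0$) functions to $C^2$ functions, the identity $\th=T\th$ delivers both the regularity (by the same recursion you use) and, for free, the natural boundary condition, because $[T\th]'(1)=0$ by construction. You instead run the standard one-dimensional Weyl-lemma bootstrap: Sobolev embedding gives $\cos\th\in C^0$, testing against $C^\infty_c((0,1))$ gives $\th''=-b(1-s)\cos\th$ distributionally, hence $\th\in C^2$ and then $C^\infty$ by iteration, and you recover $\th'(1)=0$ separately by integrating by parts against test functions with $\varphi(1)\ne 0$. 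The two proofs are essentially equivalent in content, but your handling of the free-end condition is actually the more careful one: the paper states \eqref{weak} for test functions vanishing at \emph{both} extrema, and with that test class the identity $(\varphi',\th'-\Psi')=0$ only yields that $\th-\Psi$ is affine (vanishing at $0$), not that $\th=\Psi$; to pin down $\th'(1)=0$ one must, exactly as you observe, admit variations with $\varphi(1)\ne 0$, which is legitimate since stationarity is taken over all of $H^1_{0,1}$. So your explicit remark on this point is not pedantry --- it closes a small gap that the paper's streamlined argument leaves implicit. What the paper's approach buys in exchange is the reusable identity $\th=T\th$, which connects directly to the fixed-point formulation \eqref{eqint1} used in Proposition \ref{uniqueness_}.
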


\noindent \textit{Proof}
Let us  define $\Psi=T\th $. Straightforward computation provides

\be
\Psi''=-b(1-s)\cos\theta(s),
\ee
and therefore, by equation \eqref{weak}, we get 

$$
(\varphi', \th')=-(\varphi,\Psi'')=(\varphi', \Psi'),
$$
\noindent so that $$\th=\Psi.$$ 
\noindent By definition of the operator $T$ (see \ref{eqint1}), $\Psi$ is in $C^2$ if $\theta$ is $H^1$. Hence the previous equality tells us that  $\th$ is in $C^2$ as well, and applying the argument recursively one obtains that $\th$ is in $C^{\infty}$. 

\qed

\section{The primary branch}
\noindent In this and in the next section we investigate some properties of stationary points of $E(\theta(\cdot))$ and in particular of its minimizers.

\subsection{Qualitative properties of the minimizer}
We start by proving some properties of the stationary points introduced in the previous Section.

\begin{prop}
\label{prop1}
For any $b>0$ there exists a minimizer $\tilde{\th} \in C^{\infty} [0,1]$ of the energy $E$, strictly increasing and concave.
Moreover $\tilde{\th}(1) < \frac {\pi}2$.  
\end{prop}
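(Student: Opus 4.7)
My plan is to first produce a minimizer of $E$ whose values lie in $[0,\pi/2]$, and then to read off the qualitative properties from the Euler--Lagrange equation \eqref{1}. Existence of at least one minimizer $\theta_0\in H^1_{0,1}$ is the direct-method argument already sketched before Lemma \ref{lemma1}: $\mathcal L(p,q,s)=\tfrac12 p^2 - b(1-s)\sin q$ is continuous, convex and coercive in $p$ uniformly in $(q,s)$, and $E$ is bounded below by Remark \ref{remark0}. To obtain a minimizer with the right range I would \emph{symmetrize}: define $f(x):=\max\{0,\arcsin(\sin x)\}$, a $1$-Lipschitz map $\mathbb R\to[0,\pi/2]$ with $f(0)=0$, and set $\tilde\theta:=f\circ\theta_0$. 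Then $\tilde\theta\in H^1_{0,1}$, $|\tilde\theta'|\le|\theta_0'|$ a.e.\ because $f$ is $1$-Lipschitz, and pointwise $\sin\tilde\theta\ge\sin\theta_0$ (the only nontrivial case, $\sin\theta_0<0$, gives $\sin\tilde\theta=\sin 0=0>\sin\theta_0$). Since $1-s\ge 0$ these two facts combine to give $E(\tilde\theta)\le E(\theta_0)$, so $\tilde\theta$ is itself a minimizer with values in $[0,\pi/2]$; by Lemma \ref{lemma1} it lies in $C^\infty$ and solves \eqref{1}, \eqref{bc} classically.

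With $\tilde\theta(s)\in[0,\pi/2]$, \eqref{1} yields $\tilde\theta''=-b(1-s)\cos\tilde\theta\le 0$, so $\tilde\theta$ is concave and $\tilde\theta'$ is nonincreasing; together with $\tilde\theta'(1)=0$ this forces $\tilde\theta'\ge 0$, whence $\tilde\theta$ is nondecreasing. To upgrade to the strict versions and to the bound $\tilde\theta(1)<\pi/2$, I would rule out $\tilde\theta$ ever attaining the value $\pi/2$: assume $\tilde\theta(s_0)=\pi/2$ for some $s_0\in(0,1]$; as this is the maximum value, either $s_0\in(0,1)$ with $\tilde\theta'(s_0)=0$ at that interior critical point, or $s_0=1$ with $\tilde\theta'(1)=0$ by \eqref{bc}. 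In either case the Cauchy data $(\tilde\theta(s_0),\tilde\theta'(s_0))=(\pi/2,0)$ at $s_0$ coincide with those of the trivial constant solution $\theta\equiv\pi/2$ of \eqref{1}; since the right-hand side of \eqref{1} is globally Lipschitz in $\theta$, Picard--Lindel\"of uniqueness forces $\tilde\theta\equiv\pi/2$, contradicting $\tilde\theta(0)=0$. Hence $\tilde\theta(s)\in[0,\pi/2)$ throughout, which immediately gives $\tilde\theta(1)<\pi/2$ together with $\cos\tilde\theta>0$, from which strict concavity ($\tilde\theta''<0$ on $[0,1)$) and strict monotonicity ($\tilde\theta'>0$ on $[0,1)$) follow at once.

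I expect the symmetrization step to be the main obstacle: an arbitrary minimizer produced by the direct method need not have its range in $[0,\pi/2]$, and the obvious truncations such as $\max(\theta,0)$ do not work because $\sin$ is not monotone on $\mathbb R$. The composition $f=\max\{0,\arcsin\sin(\cdot)\}$ is tailored to both terms of $E$: having Lipschitz constant exactly $1$ it cannot inflate the kinetic term, and by construction it pushes $\sin\theta_0$ upward (and never downward), so the load term becomes no worse. Once this reduction is in place the remainder is routine ODE/convexity manipulation.
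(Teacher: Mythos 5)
Your proof is correct, but it reaches the key range constraint $\tilde\th([0,1])\subset[0,\pi/2]$ by a genuinely different mechanism than the paper. The paper starts from an arbitrary minimizer and runs a case analysis on its possible shapes (decreasing near the origin, crossing $\pm\pi/2$, having an interior maximum followed by a descent, becoming constant at $\pi/2$), eliminating each case with a hand-built lower-energy competitor obtained by reflecting and truncating the profile; the range, monotonicity and concavity information emerge together at the end. You instead apply a single rearrangement $\th_0\mapsto f\circ\th_0$ with $f=\max\{0,\arcsin\sin(\cdot)\}$, which is $1$-Lipschitz, fixes $0$, maps into $[0,\pi/2]$ and satisfies $\sin(f(x))=(\sin x)^{+}\ge\sin x$, so it cannot increase either term of $E$; this forces the range a priori, and monotonicity, concavity and $\tilde\th(1)<\pi/2$ then drop out of equation \eqref{1} exactly as in the paper's final step (including the same Picard--Lindel\"of argument at the value $\pi/2$). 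Your route is shorter and avoids the case analysis entirely. The one thing it delivers less of: the paper's argument shows that \emph{every} minimizer is increasing, concave and valued in $[0,\pi/2)$, whereas your rearrangement only exhibits \emph{one} such minimizer --- reflection about $\pi/2$ preserves both terms of the energy, so $E(f\circ\th_0)=E(\th_0)$ does not by itself prevent $\th_0$ from exceeding $\pi/2$. That is all the stated proposition asks for, but the subsequent uniqueness proposition implicitly relies on the stronger version; to recover it within your framework one would add that a transversal crossing of $\pi/2$ by $\th_0$ would make $f\circ\th_0$ a minimizer with a corner (contradicting the smoothness guaranteed by Lemma \ref{lemma1}), while a tangential touch is excluded by the same ODE uniqueness argument you already use.
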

The statement of the above proposition is illustrated in Fig. 1.
\begin{figure}[H]
\centering
\label{fig1}
\includegraphics[width=0.5\textwidth]{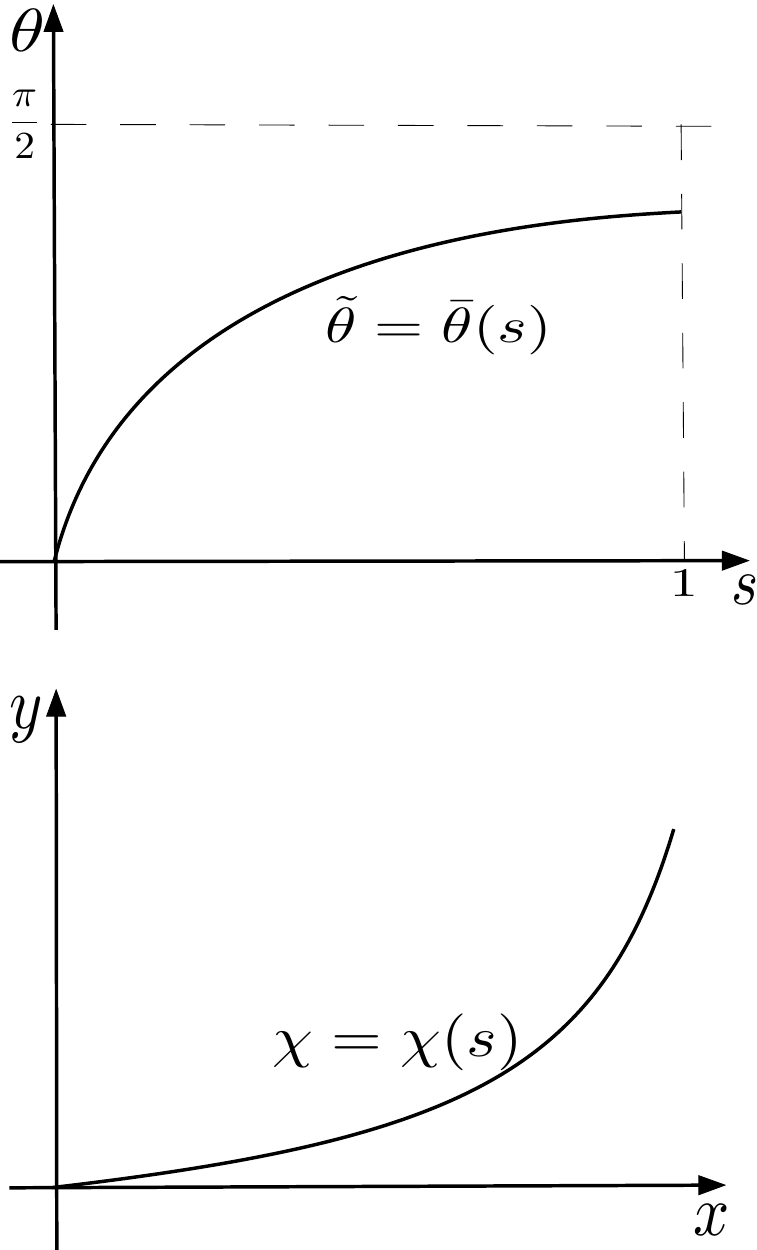}
\caption{$\theta$-graph of the minimizer  and corresponding shape of the beam.}
\end{figure}


\begin{proof}
\normalfont 
Let $\tilde{\th}$ be a minimizer of $E$. Then it is a solution of \eqref {weak} and, because of Lemma \ref{lemma1}, of \eqref{1}. 

We note that $\tilde{\th}''(0)\leq 0$. Then $\tilde{\th}(s)$ is concave for $s$ sufficiently small and, by using the equation,
$\tilde{\th} $ must cross either $\pi/2$ or $-\pi/2$ to change its concavity. 

We start by proving that  $\tilde{\th} $ is increasing in a sufficiently small right neighborhood of the origin. Supposing the contrary by contradiction, then $\tilde{\th}$ starts to decrease. Since $\tilde{\th}'(1)=0$, $\tilde{\th} $ must change concavity and this means it has to cross $-\pi/2$ in a point $s_0$.
\noindent Consider now the function $\xi \in H^1_{0,1}$ defined as
$$
\xi (s) = - \tilde{\th}(s) \quad \text {for} \quad s \in (0, s_0 ), \quad  \xi (s)= \frac {\pi} 2 \quad \text {for } \quad s \geq s_0.
$$ 
Then
$$
E(\xi)  < E( \tilde{\th}).
$$
Indeed:
$$
\int_0^{1} |\xi '(s)|^2 ds \leq \int_0^{1} |\tilde{\th} '(s)|^2  ds, 
$$
and, as $\tilde{\th}<0$ and the function $\sin(\,\cdot\,)$ has the same sign of its argument in $(-\frac{\pi}{2})$, it follows

$$
\int_0^{1} (1-s)  \sin \xi (s) ds > \int_0^{1} (1-s) \sin \tilde{\th} (s) ds.
$$
The situation is represented in Fig. 2.
\begin{figure}[H]
\centering
\label{fig2}
\includegraphics[width=0.5\textwidth]{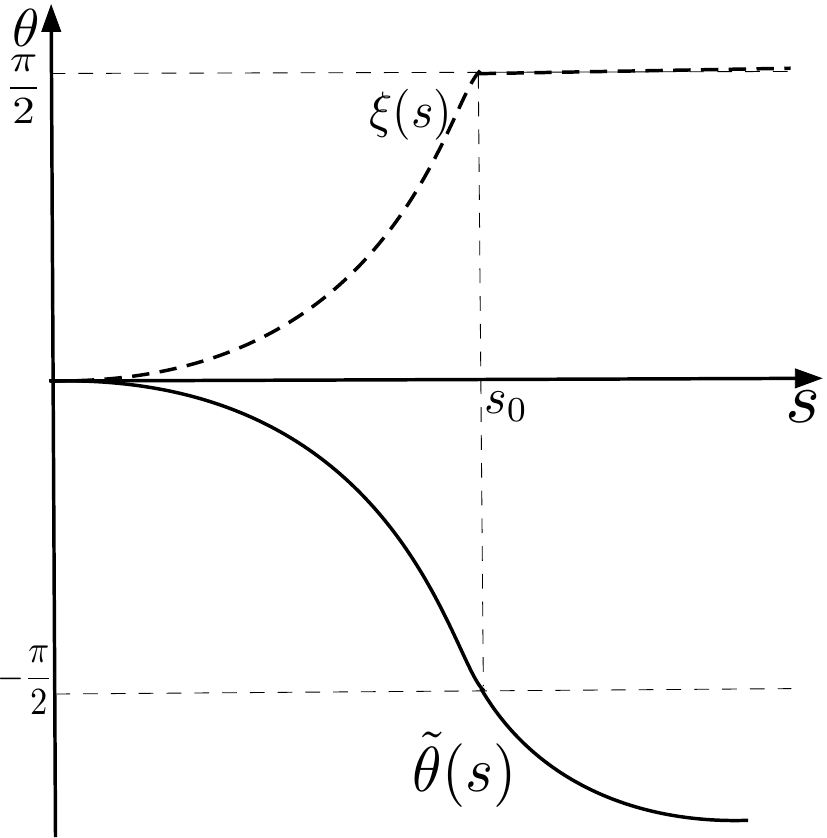}
\caption{$\tilde\th$ continuous line; $\xi$ dashed line.}
\end{figure}

Next we observe that  $\tilde{\th} (s)$ cannot cross the axis ${\th} (s)=\pi/2$ at a point $s_0$ (therefore changing concavity at that point)
because the function (see Fig. 3)
\begin{figure}[H]
\centering
\label{fig3}
\includegraphics[width=0.5\textwidth]{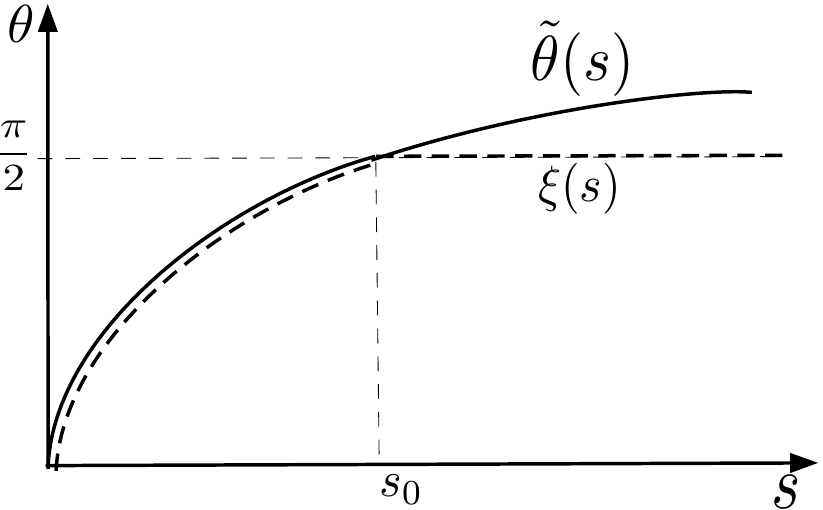}
\caption{$\tilde\th$ continuous line; $\xi$ dashed line.}
\end{figure}

$$
\xi (s) =  \tilde{\th}(s) \quad \text {for} \quad s \in (0, s_0 ), \quad  \xi (s)= \frac {\pi} 2 \quad \text {for } \quad s \geq s_0.
$$ 
is again energetically more convenient.

Suppose now that ${\th} (s)$ has a maximum below $\pi/2$ after which it decreases to cross the axis $- \pi/2$. This 
 change of concavity (at $s_0$) is necessary to 
satisfy the condition $\tilde {\th}' (1)=0$. Consider now the function
$$
\xi (s) = \tilde{\th}(s) \quad \text {for} \quad s \in (0, s_1 ), \quad \xi (s) = -\tilde{\th}(s) \quad \text {for} \quad s \in (s_1, s_0 ), 
$$
$$
\xi (s)= \frac {\pi} 2 \quad \text {for } \quad s \geq s_0,
$$ 
where $s_1$ is the first point in which $\tilde{\th}$ vanishes. See Fig. 4. Then the same arguments as before show that
the profile $\xi$ is energetically more convenient.
\begin{figure}[H]
\centering
\label{fig4}
\includegraphics[width=0.5\textwidth]{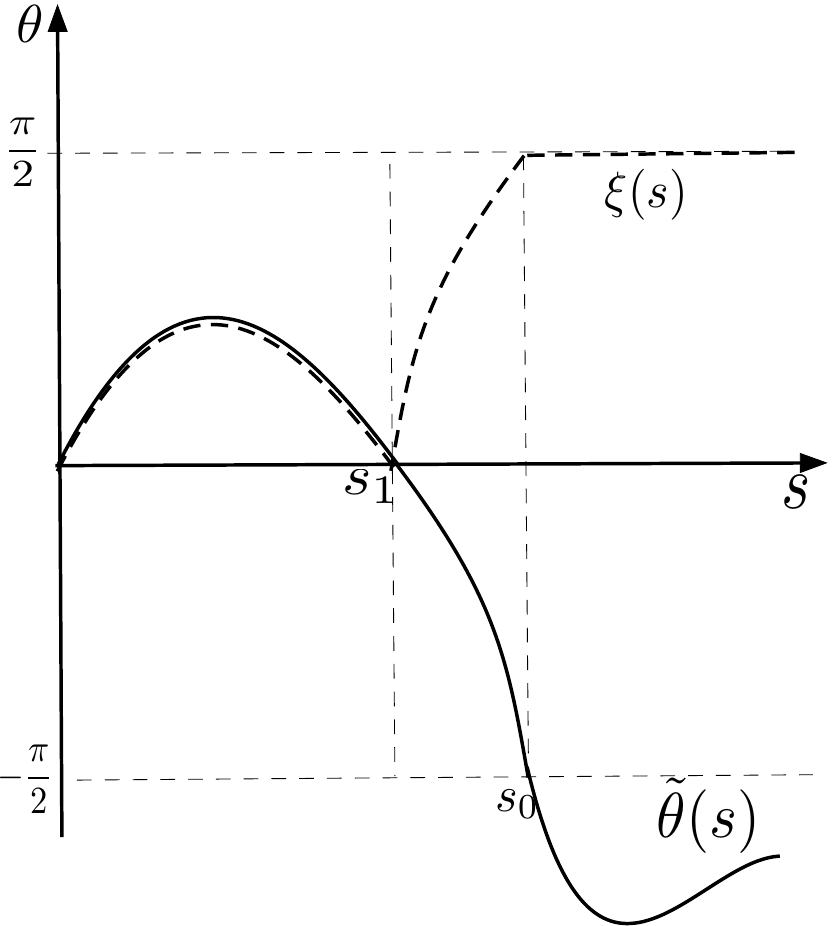}
\caption{$\tilde\th$ continuous line; $\xi$ dashed line.}
\end{figure}
It remains to exclude the case depicted in Fig. 5,
namely when $\tilde{\th}$ reaches the value $\pi/2$ at $s_0 <1$ and then it 
proceeds constantly.  However this situation is excluded.
\begin{figure}[H]
\centering
\label{fig5}
\includegraphics[width=0.5\textwidth]{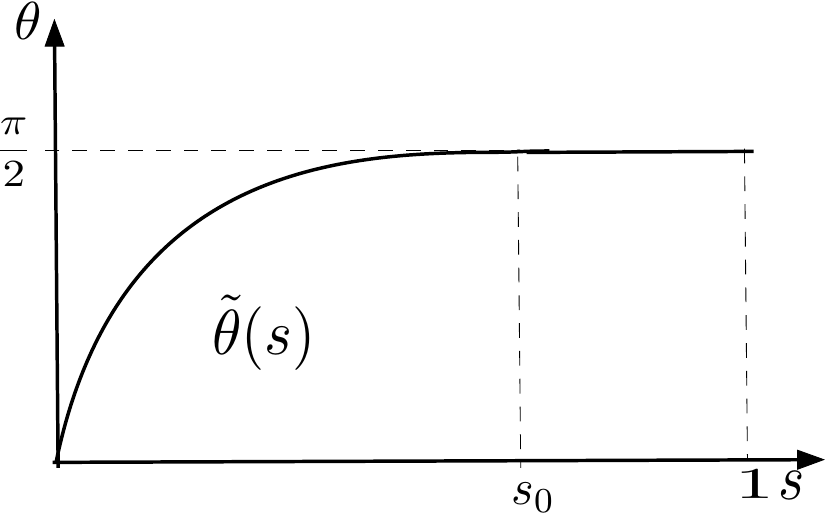}
\caption{$\tilde{\th}$ reaches the value $\pi/2$ at $s_0 <1$}
\end{figure}
\noindent Suppose indeed that  $\tilde{\th}'(s)=0$  for $s>s_0$ and $\tilde{\th}$ is not constant for $s<s_0$.
Consider now the Cauchy problem: 
\be
\th''=-b (1-s) \cos \th, 
\ee
\be
\th(s_0)=\frac{\pi}{2},\quad \th'(s_0)=0.
\ee
Then the solution $\th=\frac{\pi}{2}$ is unique in a neighbour of $s_0$, which contradicts the fact that $\tilde{\th}$ is
not constant for in a left neighbourh of $s_0$. 
By a similar argument we show that $\tilde\theta(1)<\frac{\pi}2$. We already know that $\tilde\theta(1)\le\frac{\pi}2$. By contradiction, suppose that $\theta(1)=\frac{\pi}2$ and look at the solution of the following problem:
\be
\th''=-b (1-s) \cos \th, 
\ee
\be
\th(1)=\frac{\pi}{2},\quad \quad\th'(1)=0.
\ee
The solution $\theta(s)=\frac{\pi}{2}$ is unique in a left neighbour of 1.  Moreover we know that $\tilde{\th}\ne\frac{\pi}{2}$ in a left neighbour of 1, and therefore it cannot be that $\tilde{\th}(1)=\frac{\pi}{2}$.
By the monotonicity of $\tilde\theta(\,\cdot\,)$ we also conclude that $\tilde\theta(s)<\frac{\pi}2$ for any $s\in [0,1]$.
Then, by equation \eqref{1} we also conclude that $\theta"(s)<0$ for any $s\in [0,1]$ and hence $\tilde\theta$ is strictly convex. This completes the proof of Proposition 1. 
\end{proof}

\noindent We now establish the uniqueness of the minimizer of the energy functional~\eqref{energy}.

\begin{prop}
 For any $b>0$ let $\bar \th$ be a solution to eq.n \eqref {1} with range contained in $[0,\frac {\pi}2)$.
Then $\bar \th= \tilde{\th}$. In particular there is a unique minimizer of $E(\theta(\cdot))$ and no other stationary points with range in $[0, \frac{\pi}2)$.
\end{prop}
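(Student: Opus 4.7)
My plan is to argue via a straightforward comparison of two candidate solutions, exploiting that the cosine nonlinearity is monotone on $[0,\pi/2)$ in just the right direction to make an energy-type inequality collapse.

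Let $\bar\th$ be as in the hypothesis and let $\tilde\th$ be the minimizer produced by Proposition \ref{prop1}, so that $\tilde\th$ also satisfies \eqref{1} together with the boundary conditions \eqref{bc}, \eqref{bcbis} and has range contained in $[0,\pi/2)$. By Lemma \ref{lemma1} both are $C^\infty$. Setting $w(s)=\bar\th(s)-\tilde\th(s)$, I would subtract the two instances of \eqref{1} and apply the mean value theorem pointwise to the cosine term: there exists $\xi(s)$ lying between $\bar\th(s)$ and $\tilde\th(s)$ such that
\be
w''(s)=-b(1-s)\bigl[\cos\bar\th(s)-\cos\tilde\th(s)\bigr]=b(1-s)\sin\xi(s)\,w(s).
\ee
The key observation is that since $\bar\th(s),\tilde\th(s)\in[0,\pi/2)$, the intermediate value $\xi(s)$ also lies in $[0,\pi/2)$, so $V(s):=b(1-s)\sin\xi(s)\ge 0$ on $[0,1]$.

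Next I would multiply the identity $w''=Vw$ by $w$ and integrate over $[0,1]$. Integration by parts, using the boundary conditions $w(0)=\bar\th(0)-\tilde\th(0)=0$ and $w'(1)=\bar\th'(1)-\tilde\th'(1)=0$, yields
\be
-\int_0^1 (w'(s))^2\,ds=\bigl[w(s)w'(s)\bigr]_0^1-\int_0^1 (w'(s))^2\,ds=\int_0^1 V(s)\,w(s)^2\,ds\ \ge 0.
\ee
Both sides are therefore zero; in particular $w'\equiv 0$, and combined with $w(0)=0$ this forces $w\equiv 0$, i.e.\ $\bar\th=\tilde\th$. The final sentence of the proposition then follows immediately: any stationary point of $E$ with range in $[0,\pi/2)$ is, by Lemma \ref{lemma1}, a classical solution of \eqref{1}--\eqref{bc}, hence coincides with $\tilde\th$; in particular $\tilde\th$ is the unique energy minimizer (since Proposition \ref{prop1} guarantees that any minimizer has range in $[0,\pi/2)$).

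I do not anticipate a real obstacle here, since the argument is essentially a one-line Picone/energy identity. The only place requiring a moment of care is verifying that the intermediate value produced by the mean value theorem genuinely stays in $[0,\pi/2)$, which is precisely where the range hypothesis on $\bar\th$ (together with the analogous property of $\tilde\th$ from Proposition \ref{prop1}) is used to secure the crucial sign $V\ge 0$.
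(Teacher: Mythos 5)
Your argument is correct and proves exactly the paper's statement, but it packages the mechanism differently. The paper works variationally: it observes that the second variation $\int_0^1[(h')^2+b(1-s)\sin\th\,h^2]\,ds$ is positive whenever $\th$ has range in $[0,\tfrac{\pi}{2})$, that this set of functions is convex, and hence that $\lambda\mapsto E(\lambda\bar\th+(1-\lambda)\tilde\th)$ is strictly convex and cannot be stationary at both $\lambda=0$ and $\lambda=1$. Your proof is the ``differentiated'' form of the same fact: subtracting the two copies of \eqref{1}, multiplying by $w=\bar\th-\tilde\th$ and integrating by parts is precisely the monotonicity statement $\langle E'(\bar\th)-E'(\tilde\th),w\rangle\ge\|w'\|_{L^2}^2$, with the crucial sign coming in both cases from $\sin\xi\ge 0$ on $[0,\tfrac{\pi}{2})$. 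What your version buys is elementarity and symmetry: it never invokes second variations or the minimizing property of $\tilde\th$, only that both functions solve the boundary value problem with range in $[0,\tfrac{\pi}{2})$; also, the mean-value quantity $Vw^2$ can be rewritten as the manifestly continuous nonnegative function $b(1-s)(\cos\tilde\th-\cos\bar\th)(\bar\th-\tilde\th)$, so measurability of $\xi$ is a non-issue. What the paper's packaging buys is reusability, since the same second-variation-plus-convexity device reappears in the non-bifurcation argument for the primary branch. Two points you should make explicit, though both are shared with the paper's own proof: ``solution to eq.\ \eqref{1}'' must be read as including the boundary conditions \eqref{bc}, since you use $w(0)=w'(1)=0$; and your closing sentence needs the fact that \emph{every} minimizer has range in $[0,\tfrac{\pi}{2})$, which is what the proof (rather than the bare existence statement) of Proposition \ref{prop1} establishes.
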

\bigskip

\bigskip

\begin{proof}
\normalfont The set $\mathcal{S}$ of functions $\theta$ defined on $[0,1]$ and such that $0\le\theta(s)<\frac{\pi}{2}$ is a convex set. The second variation of functional \eqref{energy} in $\th$ is 

\begin{equation}
\int_0^1[(h')^2+b(1-s)\sin\theta(s)h^2]ds,
\label{secondvar}
\end{equation}
and it is clearly positive in $\mathcal{S}$ as $\sin \th \geq 0$. Suppose now that $\tilde{\theta}\ne\bar{\th}$. Consider now the function
$$
E(\lambda):\l \in [0,1] \to E ( \l \bar \th + (1-\l) \tilde{\th} ).
$$
Since $\bar{\th}$ solves \eqref{1} it is a stationary point of \eqref{energy}. However, the function $E(\lambda)$ is convex and has a minimum in $0$. Therefore, it cannot have a stationary point in $1$, which contradicts the hypothesis that $\tilde{\theta}\ne\bar{\th}$. 
\end{proof}

\noindent We want now to study the topology of the set of solutions when $b$ changes. To this aim let us begin with a standard definition.

\begin{defi} Let us denote by $\theta_{\bar{b}}$ a solution of equation \eqref{1} with $b=\bar{b}$, $\theta(0)=0$ and $\theta'(1)=0$. Given an arbitrary $B>0$, we say that the map $ b \in [0,B] \to \th_b \in H^1_{0,1}$ is a branch of solutions if it is continuous as a function of $b$. 
\end{defi}

\noindent We first establish the following:

\begin{lem}
\label{energymonotone}
Let $\th_b$ denote the minimizer of the energy (expliciting the dependence on $b$). The function $b \rightarrow E_b(\th_b)$ is decreasing.
\label{decreasing_}
\end{lem}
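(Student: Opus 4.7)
The plan is to prove strict monotonicity by a direct comparison argument based on the definition of $\theta_b$ as a minimizer, without trying to differentiate $b\mapsto E_b(\theta_b)$ (which would require continuity of the branch that has not yet been established). To make the dependence on the load explicit, I would write $E_b(\theta)=\frac12\int_0^1|\theta'|^2\,ds - b\int_0^1 (1-s)\sin\theta(s)\,ds=:A(\theta)-b\,G(\theta)$, where $G(\theta):=\int_0^1 (1-s)\sin\theta(s)\,ds$.

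First I would fix $b_1<b_2$ and compare $\theta_{b_1}$ against $\theta_{b_2}$ in the natural way. Using $\theta_{b_1}$ as a competitor for $E_{b_2}$ and minimality of $\theta_{b_2}$ one has
\begin{equation*}
E_{b_2}(\theta_{b_2})\le E_{b_2}(\theta_{b_1})=A(\theta_{b_1})-b_2\,G(\theta_{b_1})=E_{b_1}(\theta_{b_1})-(b_2-b_1)\,G(\theta_{b_1}).
\end{equation*}
Hence it suffices to prove $G(\theta_{b_1})>0$, which would give the strict inequality $E_{b_2}(\theta_{b_2})<E_{b_1}(\theta_{b_1})$.

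For this I would invoke Proposition~\ref{prop1} directly: the minimizer $\theta_{b_1}$ is strictly increasing and satisfies $\theta_{b_1}(0)=0$ and $\theta_{b_1}(s)<\pi/2$ for all $s\in[0,1]$. Thus $0<\theta_{b_1}(s)<\pi/2$ on $(0,1]$, so $\sin\theta_{b_1}(s)>0$ there, and $(1-s)\sin\theta_{b_1}(s)>0$ on $(0,1)$. Since the integrand is continuous and positive on an interval, $G(\theta_{b_1})>0$, and the conclusion follows.

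The only potentially delicate point is ensuring the \emph{strict} sign of $G(\theta_{b_1})$ rather than merely $G(\theta_{b_1})\ge 0$; but this is handed to us by the qualitative information already contained in Proposition~\ref{prop1} (strict monotonicity plus $\theta_{b_1}(0)=0$). So the main structural idea is: minimality gives the inequality for free, and the previous qualitative analysis of the minimizer supplies the strict positivity of the loading integral.
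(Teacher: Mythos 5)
Your proposal is correct and is essentially the paper's own argument: the authors also split $E_{b_2}(\theta_{b_2})-E_{b_1}(\theta_{b_1})$ into a minimality term and the explicit loading term $(b_1-b_2)\int_0^1(1-s)\sin\theta_{b_1}\,ds$, whose sign is fixed by the positivity of $\theta_{b_1}$ from Proposition~\ref{prop1}. Your additional care about the \emph{strict} positivity of the loading integral is a minor refinement, not a different route.
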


\noindent \textit{Proof}
Let be $b_2>b_1$. Then we have: 
\be
E_{b_2}(\th_{b_2})-E_{b_1}(\th_{b_1})=E_{b_2}(\th_{b_2})-E_{b_2}(\th_{b_1})+E_{b_2}(\th_{b_1})-E_{b_1}(\th_{b_1}). 
\ee
The last member is negative since $\th_{b_2}$ is the minimizer of $E_{b_2}$ (which makes negative the difference between the first two terms) and by direct substitution in \eqref{energy1} we have: $E_{b_2}(\th_{b_1})-E_{b_1}(\th_{b_1})=(b_1-b_2)\int_0^1(1-s)\sin\th_{b_1}\text{d}s$, which is negative since $\sin\th_{b_1}$ is positive.

\qed

\begin{prop}
 The set of minimizers forms a branch of solutions.
 \label{branch}
\end{prop}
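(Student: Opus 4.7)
The plan is to establish continuity of $b \mapsto \theta_b$ via a standard direct-method compactness argument, leveraging the uniqueness of the minimizer already proved in Proposition 2. I fix $b\ge 0$ and take an arbitrary sequence $b_n \to b$ in $[0,B]$, aiming to show $\theta_{b_n} \to \theta_b$ strongly in $H^1_{0,1}$.

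First I would produce a uniform $H^1$ bound on $\{\theta_{b_n}\}$. Since each $\theta_{b_n}$ takes values in $[0,\pi/2)$ by Proposition 1, it is bounded uniformly in $L^\infty$. Comparing $E_{b_n}(\theta_{b_n}) \le E_{b_n}(0) = 0$ gives $\tfrac12\int_0^1 (\theta_{b_n}')^2\,ds \le b_n \int_0^1 (1-s)\sin\theta_{b_n}\,ds \le B$, so $\|\theta_{b_n}\|_{H^1}$ is uniformly bounded. By Rellich and the compact embedding $H^1([0,1]) \hookrightarrow C^0([0,1])$, a subsequence (not relabelled) converges weakly in $H^1$ and uniformly to some $\theta^* \in H^1_{0,1}$ with $0\le\theta^*\le\pi/2$.

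Next I identify $\theta^*$ as $\theta_b$. On the one hand, by weak lower semicontinuity of $\theta \mapsto \int(\theta')^2\,ds$ and uniform convergence of the load term,
\begin{equation}
E_b(\theta^*) \le \liminf_{n\to\infty} E_{b_n}(\theta_{b_n}).
\end{equation}
On the other hand, using $\theta_b$ as a test function and uniform continuity of $b \mapsto E_b(\theta_b)$ in $b$ (which is just linear in $b$ for fixed $\theta$),
\begin{equation}
\limsup_{n\to\infty} E_{b_n}(\theta_{b_n}) \le \limsup_{n\to\infty} E_{b_n}(\theta_b) = E_b(\theta_b).
\end{equation}
Hence $E_b(\theta^*) \le E_b(\theta_b)$, so $\theta^*$ is a minimizer of $E_b$. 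Since its range lies in $[0,\pi/2)$ (and it is a weak, hence classical, solution of \eqref{1} by Lemma \ref{lemma1}), Proposition 2 forces $\theta^* = \theta_b$. Because every subsequence admits a further subsequence converging to the same limit $\theta_b$, the full sequence $\theta_{b_n}$ converges to $\theta_b$.

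Finally, to upgrade to strong $H^1$ convergence, I observe that the two displayed inequalities combine to $E_{b_n}(\theta_{b_n}) \to E_b(\theta_b)$, and uniform convergence $\theta_{b_n}\to\theta_b$ gives $b_n\int_0^1(1-s)\sin\theta_{b_n}\,ds \to b\int_0^1(1-s)\sin\theta_b\,ds$; subtracting yields $\int_0^1 (\theta_{b_n}')^2\,ds \to \int_0^1 (\theta_b')^2\,ds$, which together with the weak convergence in $H^1$ gives strong convergence. The main (minor) obstacle is pinning down the right topology: uniqueness from Proposition 2 is what rules out branch-splitting and lets the whole sequence converge, and the strict confinement of the minimizer's range to $[0,\pi/2)$ ensures the limit still satisfies the hypotheses of Proposition 2.
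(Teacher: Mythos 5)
Your proof is correct, but it follows a genuinely different route from the paper's. The paper also begins by showing that $\theta_b$ is a minimizing sequence for $E_{b_0}$ (via the elementary bound $|E_b(\psi)-E_{b_0}(\psi)|\le\tfrac12|b-b_0|$ uniformly in $\psi$, which is the same observation underlying your $\limsup$ inequality), but it then avoids compactness entirely: since $\theta_{b_0}$ is a stationary point, an exact second-order Taylor expansion of $E_{b_0}$ around $\theta_{b_0}$ gives
\begin{equation*}
E_{b_0}(\theta_b)-E_{b_0}(\theta_{b_0})=\tfrac12\|(\theta_b-\theta_{b_0})'\|_{L^2}^2+\tfrac12 b_0\int_0^1(1-s)\sin\xi(s)\,(\theta_b-\theta_{b_0})^2\,ds
\end{equation*}
with $\xi$ an intermediate function between $\theta_b$ and $\theta_{b_0}$; since both minimizers have range in $[0,\pi/2)$, so does $\xi$, hence $\sin\xi\ge0$ and the energy excess dominates $\|(\theta_b-\theta_{b_0})'\|_{L^2}^2$ directly. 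This local strong-convexity argument yields a quantitative modulus of continuity (of order $|b-b_0|^{1/2}$ in the $H^1$ seminorm) and never needs the uniqueness statement of Proposition 2 nor any subsequence extraction. Your argument instead trades that quantitative estimate for a softer direct-method scheme: uniform $H^1$ bounds, weak compactness, lower semicontinuity to identify the limit as a minimizer, uniqueness to pin it down as $\theta_b$, and energy convergence to upgrade to strong $H^1$ convergence. Both are sound; yours is more robust (it would survive without an explicit second-variation formula), while the paper's is shorter once the Taylor expansion is accepted and gives a rate. One cosmetic remark: in your $\liminf$ step you do not actually need weak lower semicontinuity, since $E_b(\theta_{b_n})\ge E_b(\theta_b)$ holds trivially because $\theta_b$ minimizes $E_b$; this also shows the energy convergence holds for the full sequence without invoking the subsequence principle there.
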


\begin{proof*}

\normalfont We start by  showing  that, if $b \to b_0$, then $\th_b$ is a minimizing sequence for $E_{b_0}$.
For every $\psi\in H^1_{0,1}$ one has:

$$
E_b(\psi)-E_{b_0}(\psi)=(b-b_0)\int_0^1(1-s)\sin\psi\text{d}s\le|(b-b_0)|\int_0^1(1-s)\text{d}s=\frac{1}{2}|b_0-b|.
$$
Therefore, recalling that $\th_b$ minimizes $E_b$, it follows:
\bea
E_{b_0} (\psi)&= E_{b} (\psi)+E_{b_0} (\psi)-E_{b} (\psi) \geq E_{b} (\th_b) -\frac 12 |b_0-b|  \\ 
& = E_{b} (\th_b)-E_{b_0} (\th_b)+ E_{b_0} (\th_b)-\frac 12 |b_0-b|  \\ \nn
&\geq E_{b_0} (\th_b)-  |b_0-b|. \nn
\eea

\noindent As the result holds for every $\psi\in H^1_{0,1}$, this implies that
$$
 E_{b_0}(\th_b) \leq E_{b_0}(\th_{b_0}) +  |b_0-b|. 
$$

\noindent This immediately implies that $\th_b$ is a minimizing sequence for $E_{b_0}$.  

\noindent We recall that $\th_{b_0}$ is a stationary point for $E_{b_0}$, and therefore the first Fr\'{e}chet derivative of $E_{b_0}(\theta(\cdot))$ vanishes in $\th=\th_{b_0}$. This implies that, recalling the second variation of $E(\th(\cdot))$ given in formula \eqref{secondvar}, we can write the difference $E_{b_0}(\th_{b})-E_{b_0}(\th_{b_0})$ as:
$$
E_{b_0}(\th_{b})-E_{b_0}(\th_{b_0})=\frac 12 \| (\th_b -\th_{b_0})' \|_{L^2_{0,1}} ^2 +\frac 12  b_0 \int_0^1 ds (1-s) \sin \xi (s) (\th_b - \th_{b_0})^2,
$$
where $\xi (s) = \alpha\th_{b}(s)+(1-\alpha)\th_{b_0} (s)$ for some $\alpha\in[0,1]$. Since $\sin \xi \geq 0$ we obtain
\be
\| (\th_b -\th_{b_0})' \|^2_{L^2_{0,1}} \leq E_{b_0}(\th_{b})-E_{b_0} (\th_{b_0})  \to 0,
\ee
as $b \to b_0$, which completes the proof. 
\end{proof*} 
 
\quad \qed

\noindent Summarizing, we have found a branch of  solutions which are minimizers of the energy. We call this branch {\it primary}. 
The next proposition excludes that 
such a branch bifurcates for some $b_0$ with the occurrence of a new  branch of solutions of \eqref{1} or, in other words, the primary branch $ b \to \th_b^{p}$ cannot generate any other branch of solutions.
 
\begin{prop}
It does not exist $b_0$ in which a new branch of solutions arises from the primary branch.
\end{prop}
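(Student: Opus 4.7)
My plan is to prove local uniqueness of the primary branch at every point $(\theta_{b_0}^{p},b_0)$, from which non-bifurcation follows immediately. The heart of the matter is that the second variation of $E_{b_0}$ at $\theta_{b_0}^{p}$ is \emph{strictly} positive definite on $H^1_{0,1}$: since Proposition~1 gives $\theta_{b_0}^{p}(s)\in [0,\pi/2)$, hence $\sin\theta_{b_0}^{p}\ge 0$, the quadratic form \eqref{secondvar} satisfies
$$
\int_0^1 (h')^2\,ds+b_0\int_0^1 (1-s)\sin\theta_{b_0}^{p}(s)\,h(s)^2\,ds\;\ge\;\int_0^1 (h')^2\,ds,
$$
and the right-hand side is strictly positive for every nonzero $h\in H^1_{0,1}$ by the Poincar\'e inequality (using $h(0)=0$).

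The natural way to exploit this is via the implicit function theorem. Regard the fixed-point equation from \eqref{eqint1} as defining a smooth map $\Phi:H^1_{0,1}\times(0,\infty)\to H^1_{0,1}$, $\Phi(\theta,b):=\theta-T\theta$ (with the $b$-dependence of $T$ kept explicit). A direct computation shows that the Fr\'echet derivative $D_\theta T$ at $\theta_{b_0}^{p}$ is a compact integral operator on $H^1_{0,1}$ (its integral kernel involves $(1-\sigma')\sin\theta_{b_0}^{p}(\sigma')$, and compactness follows from the smoothness of this kernel together with the one-dimensional Sobolev embedding), so $D_\theta\Phi(\theta_{b_0}^{p},b_0)=I-D_\theta T$ is Fredholm of index zero and invertibility reduces to triviality of its kernel. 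Any kernel element $h$ solves
$$
h''=b_0(1-s)\sin\theta_{b_0}^{p}(s)\,h,\qquad h(0)=0,\ h'(1)=0;
$$
multiplying by $h$ and integrating by parts (the boundary terms vanish by the conditions on $h$) yields exactly the vanishing of the second-variation quadratic form on $h$, so the positivity displayed above forces $h\equiv 0$. Hence $D_\theta\Phi$ is invertible, and the implicit function theorem provides a locally unique $C^1$ family of zeros through $(\theta_{b_0}^{p},b_0)$, with which any hypothetical second branch must coincide.

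The main routine-but-delicate task I anticipate is the functional-analytic bookkeeping — verifying that $\Phi$ is $C^1$ with compact linearization in the correct spaces — rather than the kernel-triviality step, which is essentially the second-variation positivity already exploited in Proposition~2. A fully equivalent, slightly more elementary alternative that avoids the implicit function theorem entirely is to argue by contradiction at the level of the integral equation: if a second branch $b\mapsto\theta_b^{q}\ne\theta_b^{p}$ coincided with the primary one at $b_0$, continuity of the branch in $H^1_{0,1}$ combined with the representation $\theta=T\theta$ (and iterated differentiation) would upgrade the convergence $\theta_b^{q}\to\theta_{b_0}^{p}$ to $C^2$ as $b\to b_0$; then the strict monotonicity, concavity, and bound $\theta_{b_0}^{p}(1)<\pi/2$ supplied by Proposition~1 force $\theta_b^{q}$ to have range in $[0,\pi/2)$ for $b$ sufficiently close to $b_0$, and Proposition~2 yields $\theta_b^{q}=\theta_b^{p}$, a contradiction.
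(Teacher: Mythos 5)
Your argument is correct, but it takes a genuinely different route from the paper's. You prove local uniqueness of solutions of $\theta=T\theta$ near $(\theta_{b_0}^{p},b_0)$ by showing that the linearization $I-D_\theta T$ is an invertible compact perturbation of the identity --- kernel triviality being exactly the strict positivity of the second variation, read off from the linearized problem $h''=b_0(1-s)\sin\theta_{b_0}^{p}\,h$, $h(0)=h'(1)=0$ --- and then invoke the implicit function theorem, from which non-bifurcation is immediate. The paper instead argues variationally: if a branch bifurcated at $b_0$, its elements near $b_0$ would have to be local minima (again by positivity of the second variation for solutions $H^1$-close to the primary one), and two distinct nearby local minima $\theta_b^{p}$ and $\theta_b$ would force a maximum of $\lambda\mapsto E_b\bigl(\lambda\theta_b^{p}+(1-\lambda)\theta_b\bigr)$ at some interior $\lambda^*$, contradicting the positivity of the second variation at the intermediate point $\theta^*$. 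Both arguments hinge on the same non-degeneracy $\mathcal V[h]\ge\int_0^1(h')^2\,ds>0$ coming from $\sin\theta_{b_0}^{p}\ge 0$. Your formulation buys a cleaner and in fact stronger conclusion (local uniqueness of \emph{all} nearby solutions, not only of those lying on continuous branches, plus $C^1$ dependence on $b$) at the price of the functional-analytic bookkeeping you acknowledge; the paper's argument avoids linearization but is looser in justifying that the bifurcating critical points are genuine local minima and that the mountain-pass point inherits a positive-definite second variation. Your elementary fallback is also sound, with one point to make explicit: to force the range of $\theta_b^{q}$ into $[0,\pi/2)$ you need the $C^1$ (not merely $H^1$ or $C^0$) closeness that the representation $\theta=T\theta$ provides, together with $(\theta_{b_0}^{p})'(0)>0$, in order to control the sign of $\theta_b^{q}$ in a right neighbourhood of $s=0$ before the uniqueness result for solutions with range in $[0,\pi/2)$ can be applied.
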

\begin{proof}

Let us consider a value $b=b_0$ such that at $b_0$ a new branch of solutions arises.

\noindent Let $\th_b$ be a branch of solutions and denote by
$\th_b^p$ the primary one. 

Then
computing the first and second variation of the energy we find,
for a given $H^1$ function $h$ satisfying the boundary conditions $h(0)=0$,
\be
\label{secondvar2}
\frac{d^2} {d\l ^2} E_b( \th_b +\la h)\big|_{\la=0}= \int \ |h'|^2 +b \int  (1-s)   h^2\, \sin \th_b  >0.
\ee

\noindent The previous inequality holds because, in the hypothesis that at $b_0$ a new branch arises, one has that $\sin \th_b \approx  \sin \th^p_b \geq 0$ in the $H^1_{0,1}$ norm for $|b-b_0|$ small enough and for every solution $\th$ of equation \eqref{1} verifying the boundary conditions $\th(0)=0$ and $\th'(1)=0$.

This means that the new branch of solutions produced by the 
primary branch has to consist of local minima of the energy, at least for $|b-b_0|$ small enough.
But this is not possible. Indeed, let $\th_b^p$ be a minimizer and $\th_b$ a local minimizer. Let us consider the function
$$
f(\lambda):=\l \to E_b ( \l  \th_b^p + (1-\l) \th_b ).
$$
with $\l \in [0,1]$. Clearly $f(\lambda)$ has two local minima for $\l=0$ and $\l=1$ (with $f(1)<f(0)$), and therefore it must have a maximum for a certain $\lambda^*\in(0,1)$. If $|b-b_0|$ is suitably small then the function $\th^*:=\l^*\th_b^p+(1-\l^*)\th_b$ is such that $\|\th^*-\th_b^p\|_{H^1_{0,1}}$ is arbitrarily small, but in this case the sign of the second variation \eqref{secondvar2} tells us that all the directional (Gateaux) derivatives of $E$ are positive in $\th^*$, and therefore $\l^*$ cannot be a maximum for $f(\l)$.
\end{proof}

\noindent In order to parametrize the possible equilibrium configurations when $b$ varies in $\mathbb{R}^+$, let us consider the initial value problem
\eqref{1} with initial conditions $\th(0)=0$ and $\th'(0)=K$. Let $\th(s,K,b)$ be the corresponding unique solution. As we want to determine the solutions of the boundary condition problem expressed by \eqref{1}, \eqref{bc} and \eqref{bcbis}, we will plot (see Fig. \ref{bifurcation}
), in the plane $K,b$, the computed curves 
\begin{equation}
F(K,b):=\th'(1,K,b)=0.
\label{F_}
\end{equation} 
\begin{figure}[H]
\centering
\includegraphics[width=0.7\textwidth]{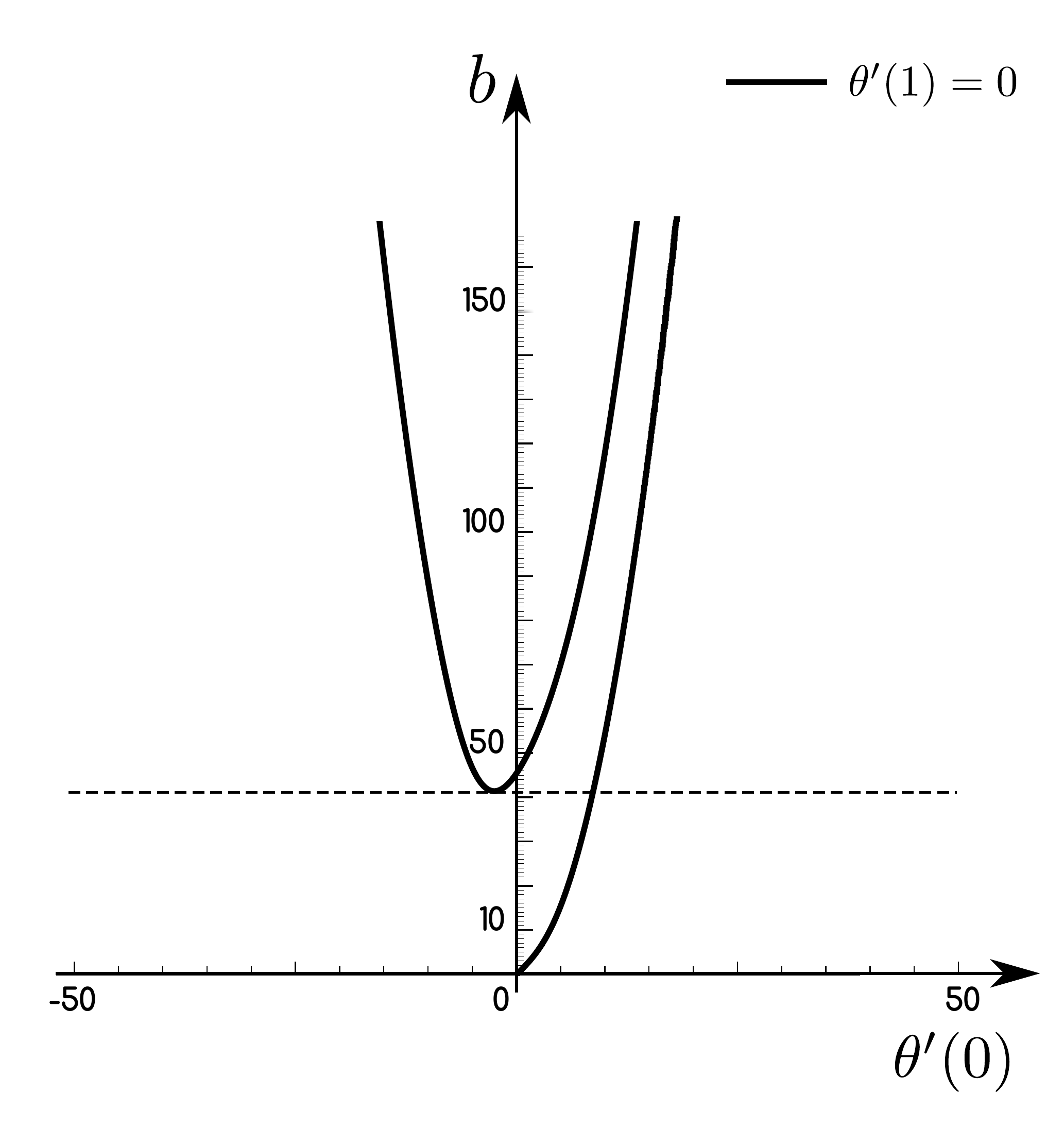} 
\caption{With a continuous line are represented points in the plane $(\th'(0),b)$ corresponding to solutions of the boundary value problem \eqref{1},\eqref{bc} and \eqref{bcbis}. The primary branch is on the right, while a secondary branch formed (as we will see) by both stable and unstable equilibrium shapes is also visible. Numerical evidence allows an estimate for the value $b_0\approx 41$ at which the secondary branch appears; the corresponding initial value of the curvature is $\approx -2.6$ (for details on the numerical procedure, see Section 6).}
\label{bifurcation}
\end{figure}
\noindent This is a preliminary step towards the establishment of a stability chart.

\noindent We remark that the level set $F(K,b)=0$ identifies a one-dimensional manifold (not necessarily connected) in the plane $(K,b)$, representing all the solutions 
of the boundary value problem  \eqref{1}, \eqref{bc}, \eqref{bcbis}, which includes all (possibly local) minima of the energy functional.  We have detected a first curve of solutions, the primary branch, constituted by all global minima of the energy.

\vskip .3cm
\begin{prop}
$K_b=\th_b'(0)$ diverges with $b$.
\end{prop}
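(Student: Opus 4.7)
My plan is a short proof by contradiction based on the integral representation of $K_b$ coming from the Euler--Lagrange equation \eqref{1}. Integrating \eqref{1} from $s$ to $1$ and using the natural boundary condition $\th_b'(1)=0$ gives
\begin{equation}
\th_b'(s) \ = \ b\int_s^1 (1-\s)\cos\th_b(\s)\,d\s, \qquad\text{hence}\qquad K_b \ = \ b\int_0^1(1-\s)\cos\th_b(\s)\,d\s.
\end{equation}
Since the prefactor $b$ diverges, the proof reduces to showing that the integral on the right stays bounded below by a strictly positive constant, uniformly in $b$ for $b$ large.

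I would argue by contradiction: assume there exist $M>0$ and a sequence $b_n\to\infty$ with $K_{b_n}\le M$. Since $\th_{b_n}$ belongs to the primary branch, Proposition \ref{prop1} applies, so $\th_{b_n}$ is concave on $[0,1]$ with $\th_{b_n}(0)=0$; hence $\th_{b_n}'$ is non-increasing and $\th_{b_n}'(s)\le K_{b_n}\le M$ for every $s$, which after one integration yields the linear bound $\th_{b_n}(s)\le M s$ on $[0,1]$.

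Next I would fix any $c\in(0,\pi/2)$ and set $s^\ast:=\min(c/M,1/2)>0$. For $s\in[0,s^\ast]$ one has $\th_{b_n}(s)\le c<\pi/2$, and therefore $\cos\th_{b_n}(s)\ge \cos c>0$, which produces the $b$-independent lower bound
\begin{equation}
\int_0^1(1-\s)\cos\th_{b_n}(\s)\,d\s \ \ge \ \cos c\int_0^{s^\ast}(1-\s)\,d\s \ \ge \ \frac{s^\ast \cos c}{2} \ =: \ C \ > \ 0,
\end{equation}
so that $K_{b_n}\ge C\,b_n \to \infty$, contradicting $K_{b_n}\le M$. The only step that really requires a line of justification is the slope bound $\th_b(s)\le K_b s$, which is an immediate consequence of concavity together with $\th_b(0)=0$; the remaining steps are essentially bookkeeping. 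The geometric content of the argument is that a bounded initial slope, combined with concavity and the clamping at the origin, forces $\th_b$ to stay away from $\pi/2$ on a macroscopic initial interval of length of order $1/K_b$, and this is incompatible with $\int_0^1(1-\s)\cos\th_b(\s)\,d\s=K_b/b\to 0$.
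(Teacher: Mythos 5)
Your proof is correct, and it takes a genuinely different route from the paper's, although both start from the same identity $K_b=b\int_0^1(1-\s)\cos\th_b(\s)\,d\s$ obtained by integrating \eqref{1} with $\th_b'(1)=0$. The paper proceeds by applying Chebyshev's integral inequality and a change of variables to reach $K_b^2\ge \frac b2 \sin\th_b(1)$, after which it still has to rule out $\th_b(1)\to 0$ along a sequence $b_n\to\infty$; this is done indirectly through the energy-monotonicity lemma (Lemma \ref{energymonotone}) together with a convergence claim ($\th_{b_n}\to 0$ pointwise and monotone implying $H^1$ convergence) that is the most delicate point of that argument. You instead exploit the qualitative information from Proposition \ref{prop1} (concavity, $\th_b(0)=0$, range in $[0,\pi/2)$) to get the slope bound $\th_b(s)\le K_b s$, which under the contradiction hypothesis $K_{b_n}\le M$ pins $\th_{b_n}$ below a fixed level $c<\pi/2$ on a $b$-independent interval $[0,s^\ast]$ and hence bounds the integral below by a positive constant, giving $K_{b_n}\ge C b_n$ directly. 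Your argument is more elementary and self-contained: it bypasses the energy comparison and the $H^1$-convergence step entirely, and if run quantitatively (take $s^\ast\sim 1/K_b$) it even yields the explicit growth rate $K_b\ge c\sqrt{b}$. What the paper's version buys in exchange is the structural inequality $K_b^2\ge\frac b2\sin\th_b(1)$, which ties the divergence of the initial curvature to the behaviour of the tip angle $\th_b(1)$.
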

\noindent \textit{Proof}
 
Let us  integrate the two members of \eqref{1} to get:
\be 
\label{Kb}
\th_b'(1)-\th_b'(0)=-b\int_0^1(1-s)\cos\th_b \text{d}s. 
\ee
Recalling \eqref{bc} and applying Chebyshev integral inequality (given that $1-s$ ad $\cos\th_b$ are both decreasing in $s$) we obtain:

\be
\label{Kb2}
K_b=\th_b'(0)\ge b\int_0^1(1-s)\text{d}s\int_0^1\cos\th_b \text{d}s=\frac{b}{2}\int_0^1\cos\th_b \text{d}s.
\ee
The last member can be written as 

\be 
\frac{b}{2}\int_0^1\cos\th_b \frac{\text{d}s}{\text{d}\th_b}\text{d}\th_b=\frac{b}{2}\int_0^{\th_b(1)}\frac{\cos\th_b}{\th_b'}\text{d}\th_b\ge \frac{b}{2K_b}\int_0^{\th_b(1)}\cos\th_b\text{d}\th_b=\frac{b}{2K_b}\sin\th_b(1),
\ee
the last inequality holding because $\th_b'$ is non increasing (by equation \eqref{1}). By \eqref{Kb} and \eqref{Kb2} we get $\left(K_b\right)^2 \ge\frac{b}{2}\sin\th_b(1)$. To prove that $K_b$ is unbounded when $b$ diverges we only have to exclude that $\th_b(1)\rightarrow 0$ when $b\rightarrow\infty$. Let us prove this by absurd. 

\noindent Suppose thus that $b_n$ is a sequence of positive reals such that $\lim_{n\rightarrow\infty}b_n=+\infty$, and let $\lim_{n\rightarrow\infty}\th_{b_n}(1)=0$. We recall that $\th_{b_n}(s)$ is positive and strictly increasing in $s$. This implies that $\th_{b_n}(s)\rightarrow 0$ pointwise for every $s$ in $[0,1]$. Since $\th_{b_n}$ is monotone and smooth for every $n$, this implies that the convergence to the limit function $\th(s)\equiv 0$ holds in the $H^1_{0,1}$ norm. Therefore, it should be $E_{b_n}(\th_{b_n})\rightarrow 0$ if $b_n\rightarrow + \infty$. But, since \eqref{bsmall} immediately implies that for small $b$ the energy is negative, the previous Lemma \ref{energymonotone} excludes this. 

\qed
\subsection{Further properties of the primary branch.}
\noindent The next two results will concern the behavior of the system when $b$ varies. Specifically, we will compute the derivative $\displaystyle{\frac{\partial E(\th)}{\partial b}}$ of the energy evaluated in the minimizer, and will study the derivative $\displaystyle{\frac{\partial\bar\th}{\partial b}}$ of the minimizer itself with respect to $b$.

\begin{prop}
Let $\th$ be the minimizer of the energy \eqref{energy1}. The derivative $\displaystyle{\frac{\partial E(\th)}{\partial b}}$ is given by $\int_0^1(s-1)\sin\th(s)\,\,ds$.
\end{prop}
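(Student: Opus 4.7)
The plan is to invoke the envelope principle: because $\th_b$ is a stationary point of $E_b$ for each $b$, the total $b$-derivative of the value $E_b(\th_b)$ should coincide with the partial $b$-derivative of $E_b(\th)$ computed at fixed $\th$. At fixed $\th$ the differentiation is immediate from the explicit formula \eqref{energy1}:
$$\left.\frac{\partial E_b(\th)}{\partial b}\right|_{\th\text{ fixed}}=-\int_0^1(1-s)\sin\th(s)\,ds=\int_0^1(s-1)\sin\th(s)\,ds,$$
so the task is to show that the $\th$-variation contribution to $\frac{d}{db}E_b(\th_b)$ is zero.

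Rather than differentiating the map $b\mapsto\th_b$ (whose differentiability has not yet been established), I would squeeze the difference quotient between two explicit bounds using only the minimizing property of $\th_b$ and $\th_{b'}$. For $b'>b$, writing $I(\th):=\int_0^1(1-s)\sin\th\,ds$, the inequalities $E_{b'}(\th_{b'})\le E_{b'}(\th_b)$ and $E_b(\th_b)\le E_b(\th_{b'})$ yield
$$-(b'-b)\,I(\th_{b'})\;\le\;E_{b'}(\th_{b'})-E_b(\th_b)\;\le\;-(b'-b)\,I(\th_b),$$
since $E_{b'}(\th)-E_b(\th)=-(b'-b)\,I(\th)$. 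Dividing by $b'-b>0$ produces an upper bound $-I(\th_b)$ and a lower bound $-I(\th_{b'})$ for the right-hand difference quotient. A symmetric computation for $b'<b$ gives the analogous two-sided pinch for the left-hand difference quotient.

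Passing to the limit $b'\to b$ it remains to show that $I(\th_{b'})\to I(\th_b)$. This follows from Proposition \ref{branch}: along the primary branch $\th_{b'}\to\th_b$ in $H^1_{0,1}$, hence uniformly on $[0,1]$ by Sobolev embedding; since $\sin$ is Lipschitz and $(1-s)$ is bounded, $I$ is continuous in $\th$ for the uniform topology, so both sides of the pinch converge to $-I(\th_b)=\int_0^1(s-1)\sin\th_b\,ds$. This gives the claimed equality.

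The only real subtlety is verifying the continuity $b\mapsto I(\th_b)$, and this is handled entirely by the $H^1$-continuity of the primary branch already proven; no new estimate is needed. The rest is an application of the standard envelope argument written in pinch form so as to bypass any differentiability assumption on $b\mapsto\th_b$.
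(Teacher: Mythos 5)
Your proof is correct, and it takes a genuinely different route from the paper's. The paper computes $\frac{\partial}{\partial b}E_b(\th_b)$ by the chain rule: it splits the energy into $E_{DEF}+E_{POT}$, differentiates the deformation term under the integral sign, integrates by parts, and uses the Euler--Lagrange equation \eqref{1} together with the boundary conditions to show that the terms involving $\frac{\partial\th}{\partial b}$ cancel exactly between the two pieces, leaving $-\int_0^1(1-s)\sin\th\,ds$. This makes the mechanism transparent (stationarity kills the $\th$-variation term) but implicitly assumes that $b\mapsto\th_b$ is differentiable, which is never established in the paper. Your pinch argument --- bounding the difference quotient of $b\mapsto E_b(\th_b)$ between $-I(\th_{b'})$ and $-I(\th_b)$ using only the two minimality inequalities, then passing to the limit via the $H^1$-continuity of the primary branch from Proposition \ref{branch} and the one-dimensional Sobolev embedding --- is the standard rigorous form of the envelope (Danskin) theorem and requires no differentiability of the branch at all, so it is strictly more careful than the paper's computation. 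The trade-off is that your argument leans essentially on $\th_b$ being a \emph{minimizer}, whereas the paper's formal calculation would apply verbatim to any smoothly varying family of stationary points; for the statement as given (which concerns the minimizer) your version is the cleaner and better-justified one.
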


\noindent \textit{Proof}
We write the derivative respect to $b$ of the deformation energy as follows:
\begin{eqnarray}\label{eqn:deform}
\frac{\partial}{\partial b}E_{DEF}=\frac{\partial}{\partial b}\int_0^1\frac{\th'^2(s)}{2}\,\,ds=\int_0^1\th'(s)\,\frac{\partial\th'(s)}{\partial b}\,\,ds\nonumber\\
=\int_0^1\left[\left(\th'(s)\,\frac{\partial\th(s)}{\partial b}\right)'-\th''(s)\,\frac{\partial\th(s)}{\partial b}\right]\,\,ds\nonumber\\
=\int_0^1b(1-s)\cos\th(s)\,\frac{\partial\th(s)}{\partial b}\,\,ds\nonumber\\
=\int_0^1b(1-s)\frac{\partial}{\partial b}\left(\sin\th(s)\right)\,\,ds
\end{eqnarray}
\noindent For the potential energy we have instead:
\begin{eqnarray}\label{eqn:poten}
\frac{\partial}{\partial b}E_{POT}=\frac{\partial}{\partial b}\left(\int_0^1-b(1-s)\sin\th(s)\,\,ds\right)\nonumber\\
=\int_0^1\left[-(1-s)\sin\th(s)-b(1-s)\frac{\partial}{\partial b}\sin\th(s)\right]\,\,ds
\end{eqnarray}
Since $E=E_{DEF}+E_{POT}$, equations (\ref{eqn:deform}) and (\ref{eqn:poten}) imply
\begin{eqnarray}
\frac{\partial}{\partial b}E_b=-\int_0^1(1-s)\sin\th(s)\,\,ds
<0
\end{eqnarray}
\noindent The last inequality follows from the fact that $0\le\th<\frac{\pi}{2}$ by Proposition \ref{prop1}, and confirms Lemma \ref{energymonotone}.

\qed

\begin{prop}
Let $\th(b,s)$ be the minimizer of the energy \eqref{energy1}. Then $\displaystyle{\frac{\partial\th(b,s)}{\partial b}}>0\quad\forall s \in (0,1]$. 
\end{prop}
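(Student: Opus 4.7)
The plan is to differentiate the Euler--Lagrange equation \eqref{1} with respect to $b$ and apply a maximum principle argument to the resulting linear ODE. Setting $u(s) := \partial_b \theta(b,s)$, differentiation yields
$$-u''(s) + b(1-s)\sin\theta(s)\, u(s) = (1-s)\cos\theta(s),$$
with boundary conditions $u(0)=0$ and $u'(1)=0$. The existence of $u$ as a $C^2$ function of $s$ will be justified using smooth parameter-dependence of ODE solutions on the primary branch, relying on non-degeneracy of the linearized boundary value problem, which is itself guaranteed by the strict positivity of the second variation \eqref{secondvar2}.

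By Proposition \ref{prop1}, $0 \le \theta(b,s) < \pi/2$, so the zeroth-order coefficient $c(s):=b(1-s)\sin\theta(s)$ is non-negative and the source $f(s):=(1-s)\cos\theta(s)$ is strictly positive on $[0,1)$. This is precisely the sign configuration needed for a maximum principle. Supposing $u$ attains a negative minimum at $s_* \in (0,1)$, one has $u''(s_*) \ge 0$ and $u(s_*) < 0$, forcing $-u''(s_*) + c(s_*)\, u(s_*) \le 0$, which contradicts $f(s_*) > 0$; the case $s_* = 0$ is excluded by $u(0) = 0$.

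The delicate case is $s_* = 1$, because both $c$ and $f$ vanish there, so the standard first-order Hopf lemma cannot be applied directly; this is where I expect the main technical obstacle. I plan to resolve it by Taylor-expanding $u$ to third order around $s=1$. Using $u'(1)=0$ and $\theta'(1)=0$, the ODE gives $u''(1)=0$ and, differentiating once more,
$$u'''(1) = -b\sin\theta(1)\, u(1) + \cos\theta(1),$$
which is strictly positive whenever $u(1) \le 0$. Since $(s-1)^3 < 0$ for $s<1$, the expansion $u(s) = u(1) + u'''(1)(s-1)^3/6 + O((s-1)^4)$ then forces $u(s) < u(1)$ for $s$ slightly below $1$, contradicting the minimality of $s_*=1$ and establishing $u \ge 0$ on $[0,1]$.

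Upgrading to strict positivity on $(0,1]$ is then routine: an interior zero at $s_0 \in (0,1)$ would be a minimum with $u''(s_0) \ge 0$, incompatible with the ODE forcing $u''(s_0) = -f(s_0) < 0$; a boundary zero $u(1) = 0$ is excluded by the same third-order Taylor argument, since $u'''(1) = \cos\theta(1) > 0$ would make $u$ strictly negative immediately to the left of $s=1$, contradicting the positivity already established on $(0,1)$.
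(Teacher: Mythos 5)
Your proof is correct, but it takes a genuinely different route from the paper's. The paper never differentiates in $b$: for $b_2>b_1$ it supposes $\theta_{b_2}(\bar s)<\theta_{b_1}(\bar s)$ at some point, takes the maximal interval $(s_1,s_2)$ on which this ordering persists, and shows that replacing $\theta_{b_2}$ by $\theta_{b_1}$ there strictly lowers $E_{b_2}$ --- the potential term because $\sin$ is increasing on $[0,\pi/2)$, the elastic term via the representation $\theta'(s)=\int_s^1 b(1-\sigma)\cos\theta\,d\sigma$ together with $b_2>b_1$ and $\cos\theta_{b_2}>\cos\theta_{b_1}$ on the swap interval. That argument is purely variational and needs no regularity of the map $b\mapsto\theta(b,\cdot)$, but as written it only yields the non-strict ordering $\theta_{b_2}\ge\theta_{b_1}$ and does not by itself establish that $\partial_b\theta$ exists. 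Your linearization-plus-maximum-principle argument delivers exactly the strict statement: the sign structure $c=b(1-s)\sin\theta\ge0$ and $f=(1-s)\cos\theta>0$ on $[0,1)$ follows from Proposition \ref{prop1}, the interior case and $s_*=0$ are standard, and the third-order Taylor expansion at the doubly degenerate endpoint $s=1$ (where $u'(1)=u''(1)=0$ but $u'''(1)=\cos\theta(1)-b\sin\theta(1)\,u(1)>0$ whenever $u(1)\le0$) is the right way to close the boundary case. The one step you leave as a sketch --- existence and smoothness of $u=\partial_b\theta$ --- is indeed guaranteed by the implicit function theorem: the homogeneous linearized problem $-h''+b(1-s)\sin\theta\,h=0$, $h(0)=0$, $h'(1)=0$ forces $\int_0^1[(h')^2+b(1-s)\sin\theta\,h^2]\,ds=0$, hence $h\equiv0$ by the positivity of the second variation \eqref{secondvar2}; to finish you should also invoke the paper's uniqueness result for stationary points with range in $[0,\pi/2)$ to identify the locally continued branch with the branch of minimizers. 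In short, the paper's method is more elementary and derivative-free but weaker in its conclusion; yours costs a regularity lemma but buys the strict inequality actually claimed.
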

\noindent \textit{Proof}
Let be $b_2>b_1$ and let $\th_{b_i}$ be the minimizers of $E_{b_i}$ ($i=1,\,2$). By absurd we suppose that
\begin{equation}
\exists \bar{s}\in\left(0,1\right]:\,\,\theta_{b_2}(\bar{s})<\theta_{b_1}(\bar{s})
\end{equation}
Since $\theta_{b_i}\in C^{\infty}[0,1]$, it is not empty the set $S$ of the intervals $I(\bar{s})$ containing $\bar{s}$, included in $[0,1]$ and such that $\,\,\theta_{b_2}(s)<\theta_{b_1}(s)\quad\forall s\in I(\bar{s})$.
Clearly there will exist an element of $S$ which is the largest (with respect to the inclusion relation); let us call $(s_1,s_2)$ this element. It will be $s_1<\bar{s}<s_2$ and, by continuity, $\theta_{b_1}(s_1)=\theta_{b_2}(s_1)$
and if $s_2<1$ then $\th_{b_2}(s_2)=\th_{b_1}(s_2)$ (we remark that it can be that $s_1=0$ and $s_2=1$). 
Let us define $\tilde{\theta}(s)$ by:
\begin{eqnarray}\tilde{\theta}:\left\{
\begin{array}{lr}
\theta_{b_2}(s) & \forall s\in[0,1]\backslash(s_1,s_2)\\
\theta_{b_1}(s) & \forall s\in(s_1,s_2)
\end{array}\right.
\end{eqnarray}
It is immediate to see that $\tilde{\theta}\in C^0[0,1]$ and $\tilde{\theta}\in C^{\infty}$ piece-wise, so that $\tilde{\theta}\in H^1_{0,1}$.

\noindent Let us evaluate now the difference
\begin{eqnarray}\label{eqn:thetaprimi}
E_{b_2}\left(\theta_{b_2}\right)-E_{b_2}\left(\tilde{\theta}\right)=\frac{1}{2}\int_{s_1}^{s_2}\left[\left(\theta'_{b_2}\right)^2-\left(\theta'_{b_1}\right)^2\right]\,\,ds\nonumber\\ -\int_{s_1}^{s_2}b_2(1-s)\sin\theta_{b_2}\,\,ds+\int_{s_1}^{s_2}b_2(1-s)\sin\theta_{b_1}\,\,ds
\end{eqnarray}
The sum of the last two terms is positive since $\theta_{b_1}>\theta_{b_2}$ and $0\le\th_{b_i}<\frac{\pi}{2}$ imply that $\sin\theta_{b_1}>\sin\theta_{b_2}$. Now, recalling equation \eqref{1} and boundary conditions \eqref{bc}, we can write $\theta'(s)$ as follows:
\begin{equation}\label{eqn:thprim}
\theta'(s)=-\theta'(1)+\theta'(s)=-\int_s^1\theta''(\sigma)\,\,d\sigma=\int_s^1b(1-\sigma)\cos\theta\,\,d\sigma
\end{equation}
We can therefore write the first term of the right hand side of \eqref{eqn:thetaprimi} as:
\begin{equation}
\label{inner}
\frac{1}{2}\int_{s_1}^{s_2}\left[\int_s^1b_2(1-\sigma)\cos\theta_{b_2}\,\,d\sigma\right]^2ds-\frac{1}{2}\int_{s_1}^{s_2}\left[\int_s^1b_1(1-\sigma)\cos\theta_{b_1}\,\,d\sigma\right]^2ds
\end{equation}
By hypothesis $b_2>b_1$. Moreover, since $0\le\th<\frac{\pi}{2}$, we have that $\theta_{b_2}<\theta_{b_1}\,\text{in}\,(s_1,s_2)\Rightarrow\cos\theta_{b_2}>\cos\theta_{b_1}$. Therefore we have $$\int_s^1 b_2(1-\sigma)\cos\theta_{b_2}\text{d}\sigma>\int_s^1 b_1(1-\sigma)\cos\theta_{b_1}\text{d}\sigma$$ 
Since the two integrals are positive, the previous inequality holds true for their squares. Therefore, the quantity in formula \eqref{inner} is positive, and so is the difference $E_{b_2}\left[\theta_{b_2}\right]-E_{b_2}[\tilde{\theta}]$, which is absurd since by hypothesis $\th_{b_2}$ is the minimizer of $E_{b_2}$.

\bigskip

\section{Other branches}

In Proposition \ref{thstab} below we prove that, for each fixed $b$ sufficiently large, there exist stationary points of the energy functional \eqref{energy1} for which $\th$ admits negative values in contrast with the solutions of the primary branch. These new solutions correspond to  local minimizers of the energy, thus representing  new stable solutions for Elastica. Referring to Figure 6, they correspond to suitable points on the represented new branch. For topological reasons, the set of stable solutions we detect via variational arguments, cannot stop. Thus other {stationary configurations} are necessarily present. Indeed, since the function $F(K,b)=\th'(1,K,b)$ is real analytic, the level set $\{(K,b)\in \mathbb{R}\times(0,+\infty)\,|\, F(K,b)=0\}$, which contains the local minimizers, cannot have extreme points (see e.g. \cite{sull}, Corollary 2, Example 1), hence it is unbounded. The stability character of these {stationary configurations} in this branch is for now unknown. {However we will establish in Section 5 some sufficient conditions (depending on $b$) for the stability of particular classes of solutions. In particular, we will show that if a solution has certain properties, then it is necessarily unstable.} Numerical evidence (see Figures \ref{Stable}, 16 in Section 6) shows the existence of these unstable solutions.

\vskip .2 cm
\noindent In order to state the main proposition of this section, let us consider the open convex cone ${\cal C}$:
 $$
 {\cal C}= \{ \th | \th \in H^1_{0,1}, \, \th(s) <0 \text{ for any } s\in(0,1] \}
 $$
 and look for the minima of $E$ restricted to such a cone. Let us define
 $$
  e_b({\cal C}):=\inf_{ \xi \in \CC } E(\xi)>-\infty
 $$
 
\noindent the last inequality obviously holding because of the Remark \ref{remark0}. The functional $E$ admits a minimum in $\overline{{\cal C}}$, the closure of $\mathcal{C}$ (see for instance \cite{Kudrila}, Theorem 7.3.8), i.e. there exists $\bar\th\in\overline{{\cal C}}$ such that $E(\bar{\th})=  e_b({\cal C})$. 

Since any element $\tilde\th$ of the primary branch is  positive, then $\tilde \th\notin \overline{{\cal C}}$ and hence $\bar \th\ne\tilde \th$. Moreover, for small $b$ we have an unique solution to the Euler-Lagrange equations, and it is positive. Then it follows that, for $b$ small$, \bar \th $ must be in $\partial\mathcal{C}$.
\begin{prop}
\label{thstab}
There exists $b_0 >0$, sufficiently large, such that for any $b>b_0$ there exists a local minimum of the energy  $\th $  which is solution of Eq. \eqref{1}, with $\th(0)=0$ and $\th'(1)=0$. Moreover $\th$ is negative and decreasing in $(0,1]$.

\end{prop}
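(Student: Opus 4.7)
The plan is to exhibit the desired local minimum as a minimizer of $E$ restricted to the closed convex cone $\overline{\mathcal{C}}$, and then show that for $b$ large this minimum is actually attained in the open cone $\mathcal{C}$, so that it is automatically a critical point of $E$ on the full space $H^1_{0,1}$, thus (by Lemma \ref{lemma1}) a classical solution of \eqref{1}.

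First I would establish existence of a minimizer $\bar\th\in\overline{\mathcal{C}}$ by the direct method: $\overline{\mathcal{C}}$ is weakly closed and convex, $E$ is coercive on $H^1_{0,1}$ (the quadratic term in $\th'$ dominates the potential contribution, which is uniformly bounded by $b/2$ in absolute value), and $E$ is weakly lower semicontinuous. To rule out the trivial $\bar\th\equiv 0$ I would produce a competitor $\xi\in\mathcal{C}$ with $E(\xi)<0$: take the piecewise linear function with $\xi(0)=0$, $\xi(s)=-\tfrac{3\pi}{2}s/\delta$ on $[0,\delta]$ and $\xi\equiv-\tfrac{3\pi}{2}$ on $[\delta,1]$, for a fixed $\delta\in(0,1)$. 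Since $\sin(-3\pi/2)=1$, a direct calculation gives
\begin{equation*}
E(\xi)\le\frac{9\pi^{2}}{8\delta}-\frac{b(1-\delta)^{2}}{2},
\end{equation*}
which is negative once $b$ exceeds a threshold $b_{0}=b_{0}(\delta)$. Hence $E(\bar\th)\le E(\xi)<0=E(0)$ and $\bar\th\not\equiv 0$.

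The decisive step is showing $\bar\th\in\mathcal{C}$, i.e.\ $\bar\th(s)<0$ for every $s\in(0,1]$. Testing with perturbations $\bar\th-\varepsilon h$, $h\ge 0$ supported in $(0,1)$, which remain in $\overline{\mathcal{C}}$, produces the one-sided inequality $\bar\th''+b(1-s)\cos\bar\th\ge 0$ in $\mathcal{D}'$, with equality on the open set $\{\bar\th<0\}$; standard $C^{1}$ regularity for obstacle-type problems then applies. Were $\bar\th(s_{0})=0$ at some $s_{0}\in(0,1)$, then $s_{0}$ would be a maximum of $\bar\th$, so $\bar\th'(s_{0})=0$, and on $[0,s_{0}]$ the function $\bar\th$ would solve the overdetermined problem $\bar\th''=-b(1-s)\cos\bar\th$ with $\bar\th(0)=\bar\th(s_{0})=\bar\th'(s_{0})=0$. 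The unique Cauchy solution from $s_{0}$ has $\bar\th''(s_{0})=-b(1-s_{0})<0$, so the trajectory immediately leaves zero; a quantitative comparison with the linearised barrier $\phi''=-b(1-s)$ shows that for $b$ sufficiently large this solution cannot return to zero at $s=0$ while remaining non-positive. The boundary case $s_{0}=1$ is ruled out analogously using the natural condition $\bar\th'(1)=0$. I expect this contact-set analysis to be the main obstacle of the proof.

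Once $\bar\th\in\mathcal{C}$, the openness of $\mathcal{C}$ in $H^1_{0,1}$ makes $\bar\th$ a local minimum of $E$ on the full space; Lemma \ref{lemma1} then yields $\bar\th\in C^{\infty}$, $\bar\th''+b(1-s)\cos\bar\th=0$ with $\bar\th'(1)=0$, and negativity on $(0,1]$ is already granted. For the decreasing property I would argue in the spirit of Proposition \ref{prop1}: if $\bar\th$ were non-monotone on $(0,1]$, reflecting the offending arc while preserving the sign constraint $\th\le 0$ would produce a strictly better competitor in $\overline{\mathcal{C}}$, contradicting the minimality of $\bar\th$.
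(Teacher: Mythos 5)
Your overall strategy coincides with the paper's: minimize $E$ over the cone $\overline{\mathcal{C}}$ of non-positive rotations, show the minimum is negative for large $b$ via an explicit curled competitor, and then argue that the constrained minimizer lies in the interior of the cone so that it is a genuine stationary point. The existence step and the competitor are essentially the paper's (the paper uses the same $-3\pi/2$ fold). One small slip there: your bound $E(\xi)\le\frac{9\pi^2}{8\delta}-\frac{b(1-\delta)^2}{2}$ silently drops the contribution of the ramp $[0,\delta]$ to the potential energy, which is \emph{positive} of order $+b\delta$ (since $\int_0^\delta\sin\xi\,ds=-\tfrac{2\delta}{3\pi}<0$); the conclusion survives because this term is dominated by $b(1-\delta)^2/2$ for $\delta$ small, but the inequality as written is false and the paper is careful to keep and then dominate this term.

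The genuine gap is in the decisive step, $\bar\th\in\mathcal{C}$. Your obstacle-problem argument gives the one-sided inequality $\bar\th''+b(1-s)\cos\bar\th\ge 0$ with equality only on $\{\bar\th<0\}$; on the contact set the equation need not hold. Consequently the case in which $\bar\th\equiv 0$ on an initial interval $[0,\lambda]$ and only then dives down is \emph{not} touched by your analysis: there is no ``overdetermined problem on $[0,s_0]$'' because the ODE is simply not satisfied on $[0,\lambda]$, and the variational inequality $0+b(1-s)\ge0$ holds trivially there. This is exactly the configuration the paper works hardest to exclude, via a global energy comparison with the translated profile $\xi(s)=\bar\th(s+\lambda)$ extended by the constant $\bar\th(1)$, which in turn requires first knowing that $\bar\th$ is monotone, stays in $[-3\pi/2,0]$ and crosses $-\pi$. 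Even for an isolated interior zero, your ``comparison with the linearised barrier $\phi''=-b(1-s)$'' cannot close the argument: for the trajectory to return to $0$ at $s=0$ it must pass through an interior minimum where $\cos\bar\th\le0$, i.e. dip below $-\pi/2$, and there the forcing changes sign and the barrier comparison gives nothing (and large $b$ is precisely the oscillatory regime). Your monotonicity step is likewise asserted rather than proved: ``reflecting the offending arc'' neither preserves the constraint $\th\le0$ in general nor obviously lowers the potential energy; the paper instead uses truncations at running minima, and the sign of the resulting potential-energy change depends on the preliminary facts that the range lies in $[-3\pi/2,0]$ and that $-\pi$ is crossed. Finally, note that $\mathcal{C}$ is not actually open in $H^1_{0,1}$ (a perturbation $\epsilon s^{3/4}$, arbitrarily small in $H^1$, violates $\th<0$ near $s=0$), so ``openness makes $\bar\th$ a local minimum on the full space'' needs a more careful justification — though the paper is also informal on this point.
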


\begin{proof}
\normalfont The main effort of the proof is in showing that $\bar \th \notin \partial\mathcal{C}$. Before that we cannot use equation \eqref{1} because we cannot consider variations.

 The basic starting point is that  $E(\bar{\th})<0$, provided that $b$ is sufficiently large. We begin  by proving this claim. 
 
 Consider the function $\xi\in \cal C$ defined by:
\begin{equation}
\begin{cases}
\displaystyle \xi(s) = -3\pi \frac s {2R} , \quad  s\in (0,R)\\
\displaystyle  \xi(s)= -3/2 \pi  \quad s \in (R ,1)
\end{cases}
\label{Folding}
\end{equation}

\noindent The total energy can be computed explicitly
$$
E(\xi)= \frac 94 \frac {\pi^2 }{R}+ b \int _0^R (1-s)   \sin (3\pi \frac s {2R} ) ds -b \int _R^1 (1-s) ds.
$$
Now we can choose  $R$ so small that the third term is dominating over the second one. Fixed $R$
we choose  $b$ so large that $E$ is negative.

\noindent In physical terms the above profile corresponds to a configuration of the beam folding around the origin for $3/4$ of the whole circle (of radius $R$), and then continuing vertically up to the end (see Fig. \ref{Folding_}).  The corresponding function $\xi(s)$ above defined is certainly not an equilibrium point for $E$, but its existence implies that it has to be  $E(\bar{\th})<0$.
\begin{figure}[H]
\centering
\includegraphics[width=0.4\textwidth]{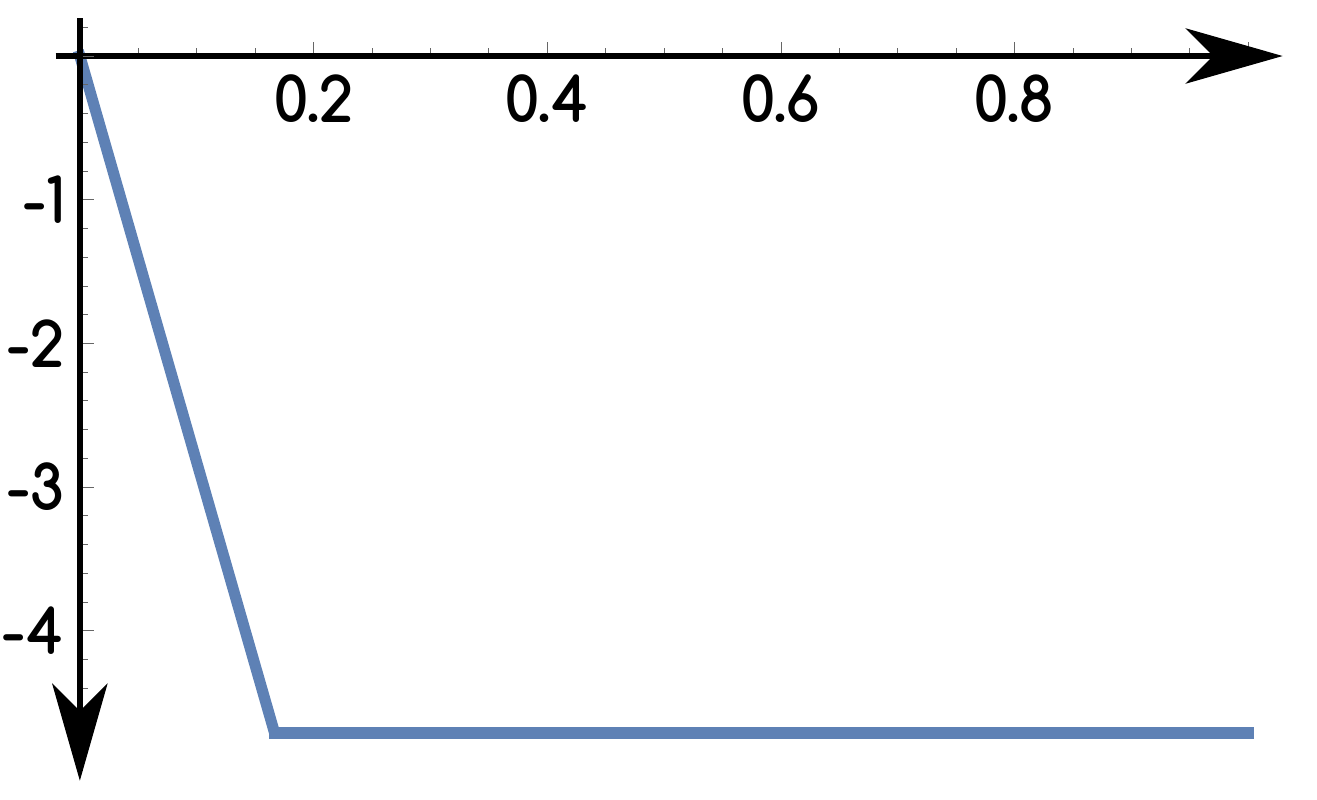}
\qquad
\qquad
\qquad
\includegraphics[width=0.25\textwidth]{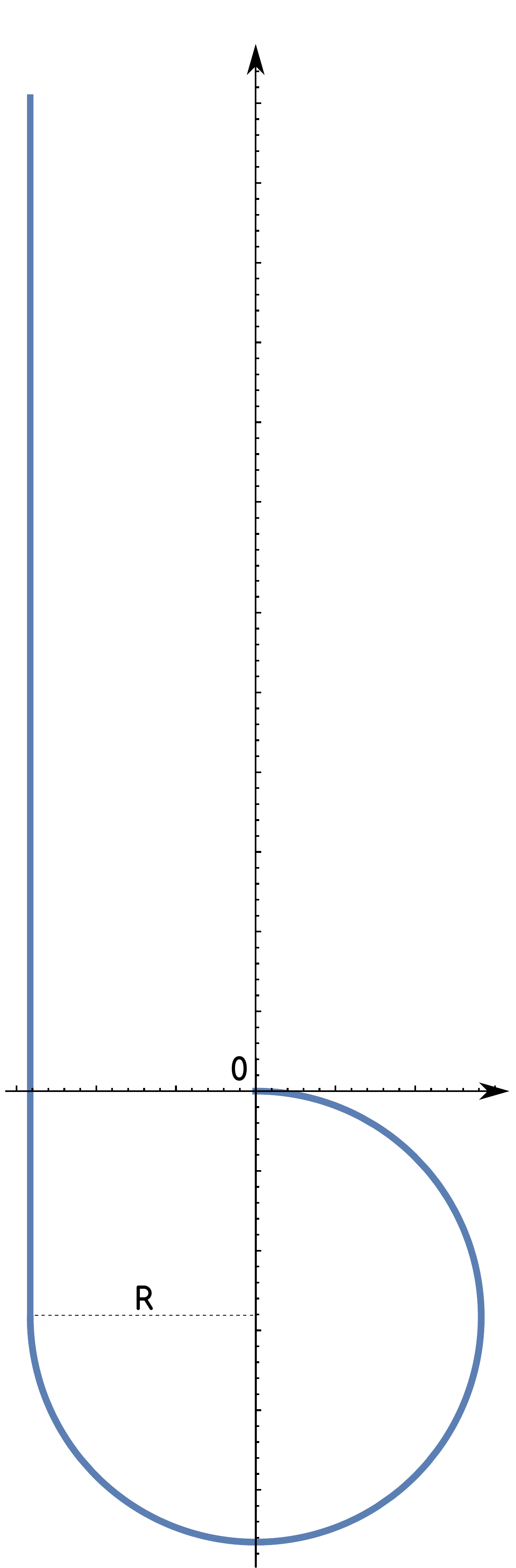} 
\caption{The function $\xi(s)$ defined in \eqref{Folding} (left) and the corresponding deformed shape of the \textit{Elastica} (right).} 
\label{Folding_}
\end{figure}
\noindent As second step we show that  $\bar{\th}$  cannot cross the axis $\th=- 3/2 \pi$. 

\noindent In facts, in this case, the shape of the \textit{Elastica} described by the function
$\xi=-3/2 \pi$ inside the set  $\{s:\bar{\th}(s)<-3/2 \pi \}$ and $\xi=\bar{\th}(s)$ outside that set, would be energetically more convenient because both the elastic and the potential energies are reduced.

\noindent As third step we prove that  $\bar{\th}$ must cross the axis $\th=-\pi$.  Otherwise, since $\bar{\th}\in\overline{\mathcal{C}}$, by hypothesis it is $\bar{\th}\le 0$, and therefore $\sin \bar{\th} (s) \leq 0$ for all $s \in (0,1)$. However this is not possible because, in this case,  the energy could not be negative.

\noindent Let $s_0$ be the first value of $s$ such that $\bar\th(s)=-\pi$. We show that $\bar\th$ is 
non increasing for $s>s_0$. 
Suppose the contrary. Then $\bar\th$ has local minimizers for $s>s_0$. We select some of them, $\bar s_1,\dots \bar s_n$ (and denote  $m_i=\bar\th(\bar s_i)$, $i=1,\dots,n$) as follows: let $\bar s_1$ be the first minimizer larger than $s_0$,  $\bar s_2$ is the first minimizer (if any) larger than $\bar s_1$ such that $m_2< m_1$, \dots $\bar s_{i+1}$ is the first minimizer larger than $\bar s_{i}$ such that $m_{i+1}< m_i$, for $i=1, \dots,n-1$. Let us now define the function
$$\xi(s)=\begin{cases}\bar\th(s)\quad \text{ for } s\notin\bigcup_{i=1}^n(\bar s_i,\bar s_{i+1}), \quad \bar s_{n+1}:= 1,\\
\displaystyle{\min_{s\in(\bar s_i,\bar s_{i+1})}}\{m_i, \bar\th(s)\}, \quad \text{ for } s\in (\bar s_i,\bar s_{i+1}).
\end{cases}
$$
\noindent The construction is illustrated in Figure 8.
Clearly $\xi \in H^1$ since by definition $\xi(s_i)=\bar{\th}(s_i)$ for any $i$.
\begin{figure}[H]
\centering
\includegraphics[width=0.4\textwidth]{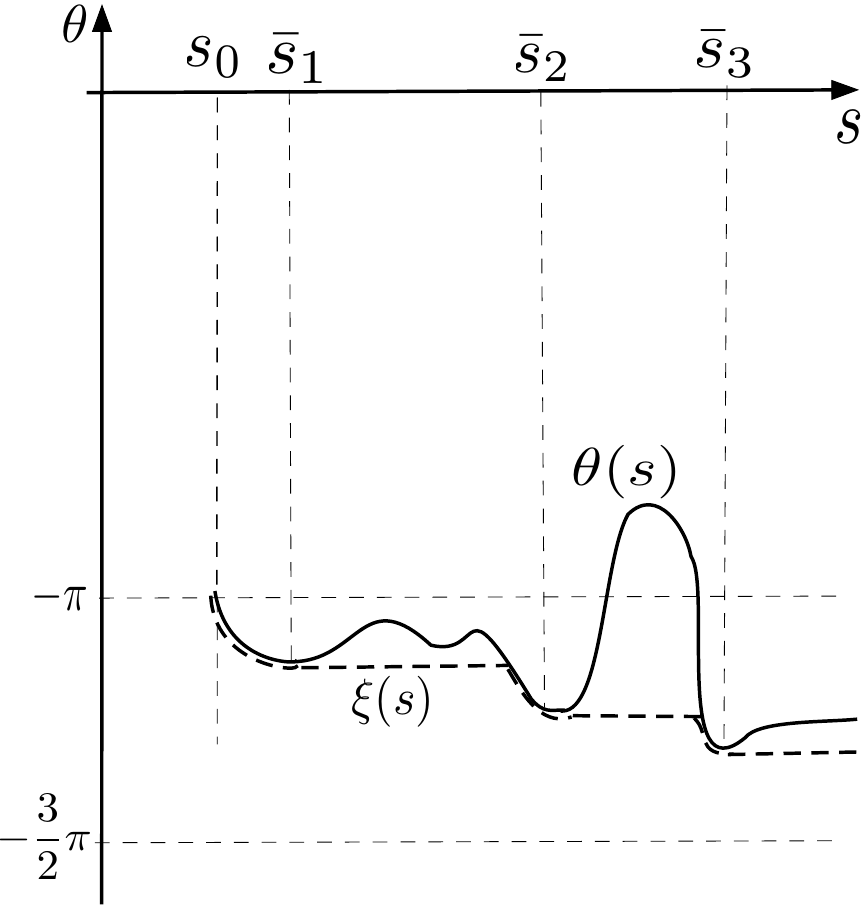}
\caption{$\theta$ continuous line, $\xi$ dashed line.}
\end{figure}
\noindent The configuration $\xi$ is energetically more convenient because the elastic energy is decreased in the constant intervals and the potential energy is decreased because $-\sin(\, \cdot\,)$ is positive above $-\pi$ and increasing in $(-\frac 3 2 \pi, -\pi)$.

\noindent Next we prove that $\bar \th$ is not increasing also in $(0,s_0)$. Suppose the contrary. Since $\bar{\th}\in\overline{\mathcal{C}}$, it is not increasing in a right neighborhood of zero, the set of local minimizers for $\bar{\th}$ in $(0, s_0)$ is not empty since $\bar{\th}(s_0)=-\pi$. Let $s_m$ be the first minimizer; the set of local maximizers is also not empty. Let $s_M$ be the first maximizer. It is obviously $s_m<s_M<s_0$. 

\noindent \underline{Case a)} Let us assume that $\th (s_m) >\pi/2$.

\noindent Since $\bar{\th}$ is monotonic in $[0,s_m]$, there exists a unique $\bar{s}$ such that $\bar{\th}(\bar{s})=\bar{\th}(s_M)$.

\noindent Then we consider the function $\xi(s)$ defined by 
$$\xi(s)=\bar{\th}(s)\quad\text{for}\quad s\in [0,\bar{s}]$$ 
$$\xi(s)=\bar{\th}(s_M)\quad\text{for}\quad s\in (\bar{s},s_M]$$ 
$$\xi(s)=\bar{\th}(s)\quad\text{for}\quad s\in (s_M,1]$$

\noindent The function $\xi$ would clearly be energetically more convenient than $\bar{\th}$. 

\noindent \underline{Case b)} Suppose instead that $\th (s_m) \leq \pi/2$

\noindent Since $\bar{\th}(s_0)=-\pi$, it is not empty the set $N$ of points such that $\bar{\th}(s)=\bar{\th}(s_m)$. Let $\bar{\bar{s}}$ be the minimum of $N$. Then we consider the function $\xi(s)$ defined by
$$\xi(s)=\bar{\th}(s)\quad\text{for}\quad s\in [0,s_m]$$ $$\xi(s)=\bar{\th}(s_m)\quad\text{for}\quad s\in (s_m,\bar{\bar{s}}]$$ $$\xi(s)=\bar{\th}(s)\quad\text{for}\quad s\in (\bar{\bar{s}},1]$$

\noindent Again the function $\xi$ would be energetically more convenient than $\bar{\th}$.

\noindent In conclusion, the function $\bar \th$ is non increasing in $(0,1)$. Hence the only possibility to be in $\partial \mathcal{C}$ is that there is $\lambda>0$ such that $\bar\th(s)=0$ for $s\in (0,\lambda)$. We now show that this is not possible.

\noindent Suppose that such a $\lambda$ exists. We define
\begin{equation}
\xi(s)=\bar{\th}(s+\l)  \quad \text {for} \quad s\in (0,1-\l)\\
\qquad \text{and } \quad \xi(s)=\bar{\th}(1) \quad \text {for} \quad s\in (1-\l,1)
\label{exclude}
\end{equation}
\noindent An explicit computation leads us to  (defining $B:=\sin \bar{\th}(1)$)
$$
E(\xi)=\frac 12 \int_0^1 |\xi'|^2 ds -b \int_0^{1-\l} (1-s) \sin (\bar{\th} (s+\l)) ds - b B \int_{1-\l}^1 (1-s) ds.
$$
\noindent Moreover, using that $ \int_0^1 |\xi'|^2 ds = \int_0^1 | \bar{\th}'|^2 ds$, by means of an obvious change of
variables
$$
E(\xi)-E(\bar{\th}) = -b \l \int _{\l}^1 \sin \bar{\th} (s) ds - b B \int_{1-\l}^1 (1-s) ds
$$

\noindent We show that the above quantity is negative. Indeed the last term is negative because $\bar\th(1)\in (-\frac 3 2 \pi, -\pi)$.
As for the first term of the right hand side, we write
\begin{equation}
T= \int _{\l}^1 \sin \bar{\th} (s) ds = \int _{\l}^{s*} \sin \bar{\th} (s) ds +\int _{s*} ^1 \sin \bar{\th} (s) ds 
\end{equation}
\noindent where $s*$ is a point for which $ \bar{\th}(s*)=-\pi$. We remark that, because of the monotonicity of $\bar{\th}$, it is $\l<s^*$. Suppose by absurd that $T<0$, namely
$$
- \int _{\l}^{s*} \sin \bar{\th} (s) ds > \int _{s*} ^1 \sin \bar{\th} (s) ds.
$$
\noindent Then it holds
\begin{align*}
 &-\int _{\l}^{s*} (1-s) \sin \bar{\th} (s) ds > -\int_{\l}^{s*} (1-s^*) \sin \bar{\th} (s) ds > \\ 
 &\int _{s*} ^1 (1-s^*)  \sin \bar{\th} (s) ds > \int _{s*} ^1 (1-s)  \sin \bar{\th} (s) ds.
\end{align*}
\noindent Therefore
$$
  -b \int _0^1 (1-s)  \sin \bar{\th} (s) ds > 0
 $$
\noindent but this contradicts the negativity of $E(\bar{\th})$.

\noindent Summarizing, we are left with the two possibilities depicted in Figs. 9, 10.
\begin{figure}[H]
\centering
\includegraphics[width=0.4\textwidth]{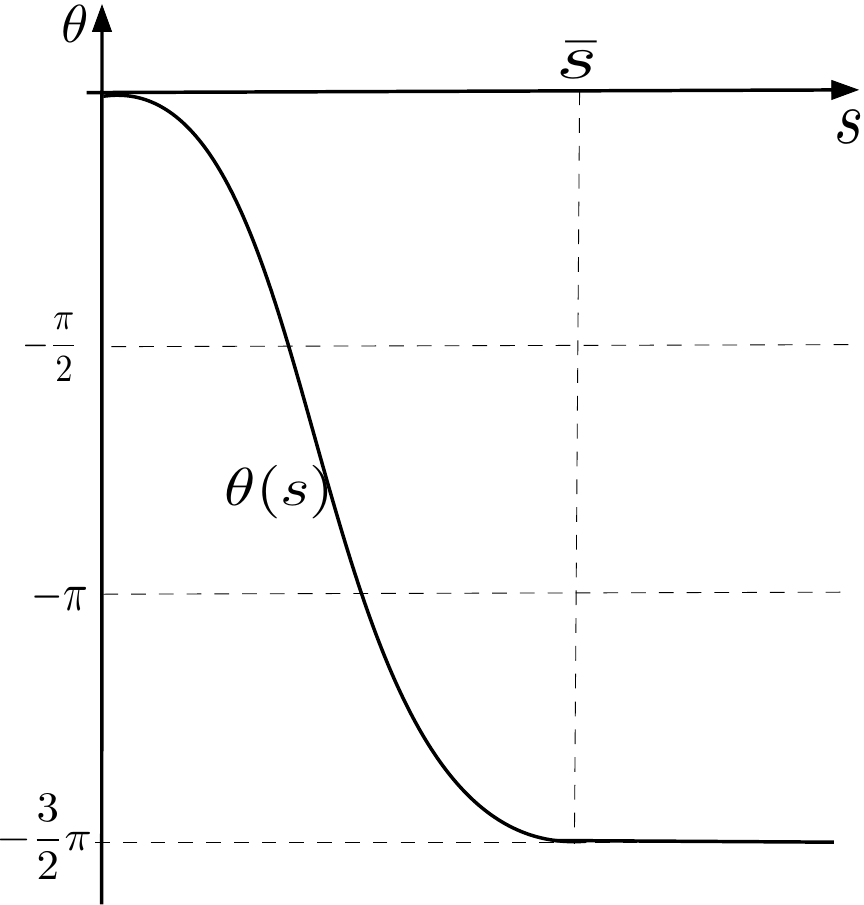}
\caption{$\theta=-\frac 3 2\pi$ for $s\ge \bar s$.}
\end{figure}
\begin{figure}[H]
\centering
\includegraphics[width=0.4\textwidth]{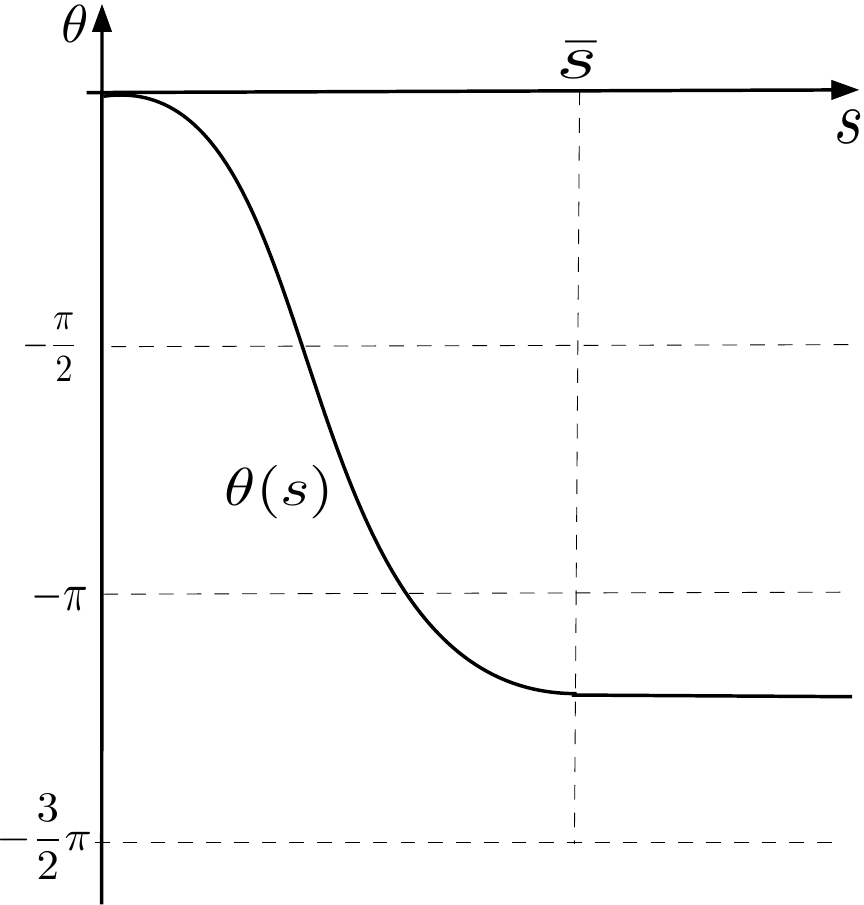}
\caption{The only possible $\theta$.}
\end{figure}

\noindent Note that in both  cases $\bar{\th}$ is decreasing so that it cannot vanish in $(0,1]$. This is enough to conclude that
$\bar{\th} \in \CC$.  Therefore we are now allowed to use Eq. \eqref {1}, \eqref{bc}. The same uniqueness argument, already used for the primary branch, 
excludes also the possibility that $\bar{\th}$ takes the constant value $-3/2 \pi$ over a finite interval.

\noindent In conclusion $\bar{\th}$ is strictly decreasing in $(0,1]$ and $\lim_{s \to 1} \bar{\th}'(s)=0$. The continuity $b \to \bar{\th}_b$ follows the same lines as shown in Proposition \ref{branch}. 

\end{proof}

\noindent We want now to further study the solution just discussed, and prove that it is obtained by ``gluing'' together the solutions of two different problems.
\noindent Let $\tilde{\theta}(s)$ be a solution of the boundary value problem: 

\begin{equation}
\theta''(s)=-b(L-s)\cos\theta(s)\quad\quad
\theta(0)=0\quad\quad
\theta'(L)=0
\label{BV}
\end{equation}
such that:

\begin{itemize}
\item[a.] $\tilde{\theta}(s)$ is decreasing;
\item[b.] its range is contained in $[0,-\frac{3}{2}\pi]$;
\item[c.] there exists a unique $\bar{s}$ such that $\tilde{\theta}(\bar{s})=-\pi$.
\end{itemize}

\noindent Note that these hypotheses are satisfied by the solution corresponding to the deformed shape represented in Fig.\ref{Stable}.   

\begin{prop}
Suppose that there exists a solution of the boundary value problem \ref{BV} satisfying requirements a, b and c given before. Then the relative deformed shape restricted  to $[\bar{s},L-\bar{s}]$ corresponds (up to a reflection) to the deformed shape of a beam of length $L-\bar{s}$ in the minimizer of the energy (primary solution).
\end{prop}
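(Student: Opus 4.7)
The plan is to reduce the statement to Proposition 2 (uniqueness of stationary points with range in $[0,\pi/2)$) by means of an explicit change of variable that translates the parameter and reflects the angle across the value $-\pi$. The geometric intuition is that, at $\bar s$, the tangent of $\tilde\theta$ points in the $-x$ direction, and beyond $\bar s$ the curve evolves, under the same physical load, exactly like a primary--branch beam attached at its origin with tangent in the $+x$ direction, only mirrored about the vertical axis.

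Concretely, I would set $\ell:=L-\bar s$, introduce $\tilde s:= s-\bar s\in [0,\ell]$, and define
\[
\phi(\tilde s):=-\tilde\theta(\bar s+\tilde s)-\pi.
\]
A direct computation yields $\phi(0)=0$ by property (c), $\phi'(\ell)=-\tilde\theta'(L)=0$ by the boundary condition in \eqref{BV}, and, using $\cos\tilde\theta=\cos(-\phi-\pi)=-\cos\phi$,
\[
\phi''(\tilde s)=-\tilde\theta''(\bar s+\tilde s)=b(\ell-\tilde s)\cos\tilde\theta(\bar s+\tilde s)=-b(\ell-\tilde s)\cos\phi(\tilde s).
\]
Thus $\phi$ satisfies exactly the Euler--Lagrange problem \eqref{1}--\eqref{bc}--\eqref{bcbis} for a beam of length $\ell$ under the same load $b$. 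The correspondence of tangents $(\cos\phi,\sin\phi)=(-\cos\tilde\theta,\sin\tilde\theta)$ is the announced reflection across the $y$-axis, so it remains only to show that $\phi$ is \emph{the} primary minimizer for the shorter beam.

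For this I would check that $\phi$ takes values in $[0,\pi/2)$ and invoke Proposition 2 (suitably rescaled to length $\ell$). Since $\tilde\theta$ is decreasing (property a) and $\tilde\theta(\bar s)=-\pi$, one has $\tilde\theta(\bar s+\tilde s)\le -\pi$, whence $\phi\ge 0$ and $\phi$ is increasing; property (b) gives $\tilde\theta\ge -3\pi/2$, hence $\phi\le \pi/2$. The main (in fact, the only delicate) step is excluding the value $\pi/2$: if $\phi(\tilde s_0)=\pi/2$ at some $\tilde s_0\in (0,\ell]$, then by the monotonicity of $\phi$ we must have $\tilde s_0=\ell$, so $\phi(\ell)=\pi/2$ together with $\phi'(\ell)=0$, and the Cauchy uniqueness argument already employed in the proof of Proposition \ref{prop1} forces $\phi\equiv \pi/2$ in a left neighbourhood of $\ell$, contradicting the strict monotonicity inherited from (a). Once this is ruled out, Proposition 2 yields that $\phi$ coincides with the unique minimizer of the energy for the beam of length $\ell$, which is precisely the primary--branch solution; this proves the assertion, with the reflection made explicit by the identity $(\cos\phi,\sin\phi)=(-\cos\tilde\theta,\sin\tilde\theta)$.
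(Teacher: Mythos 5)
Your proposal is correct and follows essentially the same route as the paper: the substitution $\phi=-\tilde\theta-\pi$ combined with the shift $s\mapsto s-\bar s$, verification that the transformed function solves the boundary value problem for a beam of length $L-\bar s$, and identification with the primary minimizer via the uniqueness of stationary points with range in $[0,\pi/2)$. If anything, you are slightly more careful than the paper, since you explicitly rule out the value $\pi/2$ (needed to invoke the uniqueness proposition, whose hypothesis is a range in the half-open interval) via the Cauchy uniqueness argument, a point the paper passes over.
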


\begin{proof} Let us set $\phi(s)=-\tilde{\theta}(s)-\pi$. Obviously $\phi$ is increasing, $\phi''=-\tilde{\theta''}$ and $\cos{\tilde{\theta}}=-\cos\phi$. We have therefore that:

\begin{equation}
\phi''(s)=-b(L-s)\cos\phi(s)
\end{equation} 
holds with boundary conditions:

\begin{equation}
\phi(0)=-\pi \quad \quad \quad \phi'(L)=0
\end{equation} 
Let us also set $\xi=s-\bar{s}$ and $\gamma(\xi)=\phi(\bar{s}+\xi)$. We consider now the restriction of $\gamma(\xi)$ on the interval $I_{\bar{s}}=[0,L-\bar{s}]$. It is immediate to verify that $\gamma(\xi)$ solves the boundary value problem:

\begin{equation}
\gamma''(\xi)=-b(L-\bar{s}-\xi)\cos\gamma(\xi)\quad\quad
\gamma(0)=0\quad\quad
\gamma'(L-\bar{s})=0
\end{equation}
Moreover, $\gamma$ is increasing in $I_{\bar{s}}$ and its range is contained in $[0,\frac{\pi}{2}]$. Therefore, recalling Proposition \ref{prop1},  $\gamma(\xi)$ corresponds to the (unique) absolute minimum of the energy functional when the length of the \textit{Elastica} is equal to $L-\bar{s}$, which completes the proof.

\end{proof}

\noindent As an internal consistency check of the numerical tools employed, we plotted a superposition between the curled stable solution and the absolute minimum solution of suitable reduced length (and rotated) in Fig.~\ref{overlap}.
\begin{figure}[H]
\begin{center}
\includegraphics[scale=0.25]{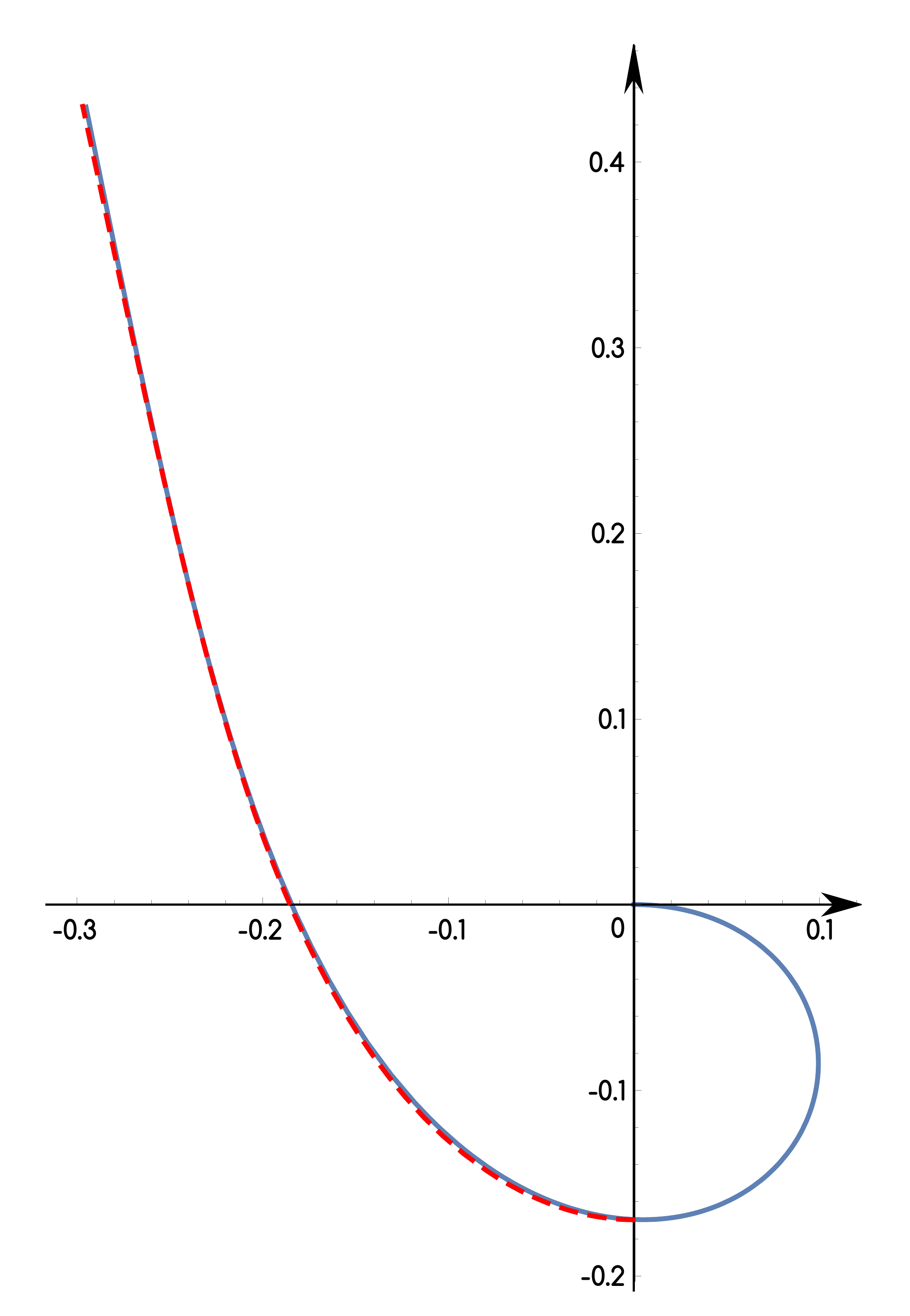}
\end{center}
\caption{Superposition of the curled stable solution (continuous line) and the absolute minimum solution (dashed line).}
\label{overlap}
\end{figure} 

\section*{5. Results on stability for two classes of solutions}
 
Let us recall the second variation of the energy functional $E$:

\begin{equation}
\mathcal{V}=\int_0^1[(h')^2+b(1-s)\sin\theta(s)h^2]\text{d}s
\label{1b}
\end{equation}

\noindent We start by proving the following result, which can be applicable for solutions $\theta$ (possibly different from the local/global minimizers analyzed in the previous sections) such that $\sin\theta$ is positive in a neighborhood of 1. 

\begin{prop}
\label{stability}
Let $\theta(s)$ be a stationary point for $E$ for a given $b$. If there exists $\lambda \in (0,1)$ such that $\sin\theta$ is non negative in $[\lambda,1]$, then $\mathcal V$ is positive if $b<\frac{\pi^2}{2\lambda^3(2-\lambda)}$
\end{prop}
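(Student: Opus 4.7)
The plan is to split the integral defining $\mathcal V$ at $s=\lambda$ and exploit the sign information separately on the two subintervals. On $[\lambda,1]$, the integrand $(h')^2+b(1-s)\sin\theta(s)h^2$ is pointwise non-negative: $(h')^2\ge 0$ trivially, and $b(1-s)\sin\theta(s)h^2\ge 0$ by the standing hypothesis $\sin\theta\ge 0$ on $[\lambda,1]$ together with $1-s\ge 0$. So this portion is safely discarded as a non-negative contribution.

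The real work is to control the possibly negative contribution from $[0,\lambda]$, where $\sin\theta$ can take values as low as $-1$. Using $|\sin\theta|\le 1$ and $1-s\le 1$ one reduces to the cleaner inequality
\[
\mathcal V\ge \int_0^1 (h')^2\,ds - b\int_0^\lambda (1-s)\,h(s)^2\,ds.
\]
Then I would invoke the boundary condition $h(0)=0$, write $h(s)=\int_0^s h'(\sigma)\,d\sigma$, and apply Cauchy--Schwarz to get the pointwise bound $h(s)^2\le s\int_0^s (h'(\sigma))^2\,d\sigma$. Plugging this in and swapping the order of integration via Fubini yields
\[
\int_0^\lambda(1-s)\,h(s)^2\,ds\;\le\; \int_0^\lambda (h'(\sigma))^2\,\Bigl[\int_\sigma^\lambda (1-s)\,s\,ds\Bigr]d\sigma,
\]
where the inner bracket is a decreasing function of $\sigma$, bounded above on $[0,\lambda]$.

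To obtain the sharp constant $\pi^2/(2\lambda^3(2-\lambda))$ asserted in the statement, the crude $(1-s)\le 1$ step has to be replaced by a finer, weighted argument: essentially the sharp Poincar\'e/Wirtinger inequality on $[0,\lambda]$ for functions with $h(0)=0$ and free endpoint at $\lambda$, whose principal eigenvalue is $\pi^2/(4\lambda^2)$, is combined with the explicit weight $1-s$ (integrating against $s(2-s)$-type factors produces the polynomial $\lambda^3(2-\lambda)$ in the denominator). The end product is an inequality of the form $\mathcal V\ge (1-bK(\lambda))\int_0^1(h')^2\,ds$, which is strictly positive, hence $\mathcal V>0$, whenever $bK(\lambda)<1$, i.e.\ whenever $b$ lies below the claimed threshold.

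The main obstacle, in my view, is matching the precise constant: the easy Cauchy--Schwarz estimate gives only a bound scaling like $1/\lambda^2$, whereas the stated threshold scales like $1/\lambda^3$ for small $\lambda$. Reaching it requires optimally distributing the weight $(1-s)$ across the Cauchy--Schwarz step and using the sharp Wirtinger constant (with the correct Dirichlet/Neumann endpoint configuration at $0$ and $\lambda$) rather than the naive Poincar\'e bound. Once those sharp ingredients are assembled, the rest of the argument is a routine combination of the two resulting pointwise estimates.
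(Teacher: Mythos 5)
Your reduction coincides with the paper's: split $\mathcal V$ at $s=\lambda$, discard the non\-negative contribution of $[\lambda,1]$, bound $|\sin\theta|\le 1$ on $[0,\lambda]$, and reduce everything to
\begin{equation*}
\int_0^\lambda (h')^2\,ds \;>\; b\int_0^\lambda (1-s)\,h^2\,ds .
\end{equation*}
The gap is that you never establish this with the stated constant. Your Cauchy--Schwarz/Fubini estimate yields $\int_0^\lambda(1-s)h^2\,ds\le \frac{\lambda^2(3-2\lambda)}{6}\int_0^\lambda(h')^2\,ds$, hence positivity only for $b<\frac{6}{\lambda^2(3-2\lambda)}$; the ``finer weighted argument'' that is supposed to upgrade this to $\frac{\pi^2}{2\lambda^3(2-\lambda)}$ is described but never carried out, and that upgrade is the entire content of the proposition. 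The paper's route differs in two ingredients you did not supply: a preliminary lemma reducing to variations $h$ that are non\-negative and non\-decreasing (via $\mathcal V[h]=\mathcal V[|h|]$ and a reflection construction), after which the weight is decoupled by Chebyshev's integral inequality for oppositely ordered functions and the one\-sided Poincar\'e inequality $\|h\|_{L^2(0,\lambda)}\le\frac{2\lambda}{\pi}\|h'\|_{L^2(0,\lambda)}$ is applied.

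That said, your suspicion that the $\lambda^{-3}$ scaling is out of reach is well founded and should not be papered over. Chebyshev's integral inequality carries a normalizing factor $\frac1\lambda$, i.e.\ $\int_0^\lambda(1-s)h^2\,ds\le\frac1\lambda\int_0^\lambda(1-s)\,ds\int_0^\lambda h^2\,ds$; the paper's display omits it, and restoring it turns the threshold into $\frac{\pi^2}{2\lambda^2(2-\lambda)}$. This matches the eigenvalue heuristic implicit in your last paragraph: the optimal constant in the displayed inequality is the first eigenvalue of $-h''=b(1-s)h$ with $h(0)=h'(\lambda)=0$, which behaves like $\pi^2/(4\lambda^2)$ as $\lambda\to 0$, so no estimate of this kind can produce a $\lambda^{-3}$ threshold. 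In short, your argument as written proves a strictly weaker statement than claimed, but the sharp ingredient you were hunting for does not exist; the honestly provable version of the proposition carries $\lambda^2$ in the denominator (which, for the application with $\lambda=0.3$, lowers the admissible $b$ from roughly $107$ to roughly $32$).
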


\begin{proof*}
\normalfont Since $\mathcal{V}[h]=\mathcal{V}[|h|]$, we can limit ourselves to variations that are non negative.

\noindent We can suppose that $\sin\theta$ is negative in the whole interval $(0,\lambda)$, as if not $\mathcal{V}$ will be increased by replacing $\sin\theta$ with $-\sin\theta$ on the subsets of $(0,\lambda)$ in which it is negative.

\noindent The stability of the solution $\theta$ is then equivalent to

\begin{equation}
\int_0^1(h')^2ds+b\int_{\lambda}^1(1-s)\sin\theta(s)h^2ds>b\int_0^\lambda(1-s)|\sin\theta(s)|h^2ds
\label{2}
\end{equation}
which holds if so does

\begin{equation}
\int_0^{\lambda}(h')^2ds>b\int_0^\lambda(1-s)h^2ds
\label{3}
\end{equation}
\end{proof*}
\noindent Before proceeding, let us prove the following 

\begin{lem}
Without losing generality, we can establish \eqref{3} only for variations $h$ for which there exists a $C^0$ non decreasing representative. 
\end{lem}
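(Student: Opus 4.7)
The plan is to reduce \eqref{3} to non-decreasing test functions by replacing $h$ with its running supremum on $[0,\lambda]$. First, since both sides of \eqref{3} are invariant under $h\mapsto |h|$ (the weak derivatives satisfy $|\,|h|'\,|=|h'|$ a.e.), I may assume $h \ge 0$. Taking the continuous representative of $h$ I then define
$$\tilde h(s) := \max_{\sigma\in[0,s]}h(\sigma),\qquad s\in [0,\lambda].$$
Clearly $\tilde h$ is continuous, non-decreasing, satisfies $\tilde h(0)=0$, and $\tilde h\ge h$ pointwise. The lemma will follow once I establish
\begin{align*}
\text{(a)}\quad &\int_0^\lambda (\tilde h')^2\,ds\,\le\, \int_0^\lambda (h')^2\,ds,\\
\text{(b)}\quad &\int_0^\lambda (1-s)\,\tilde h^2\,ds\,\ge\, \int_0^\lambda (1-s)\,h^2\,ds,
\end{align*}
for then \eqref{3} applied to the non-decreasing $\tilde h$ yields \eqref{3} for $h$.

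Inequality (b) is immediate from $0\le h\le \tilde h$ and $(1-s)\ge 0$ on $[0,\lambda]$. For (a), I would partition $[0,\lambda]=A\cup A^c$ with $A=\{s:h(s)<\tilde h(s)\}$, which is open by continuity. On $A$, the supremum defining $\tilde h(s_0)$ is attained at some $\sigma^*<s_0$, and by continuity of $h$ the function $\tilde h$ is constantly equal to $h(\sigma^*)$ on a whole neighbourhood of $s_0$, so $\tilde h'=0$ a.e.\ on $A$. On $A^c$, where $h\equiv \tilde h$, a standard Lebesgue density-point argument gives $\tilde h'=h'$ a.e. Combining these,
$$\int_0^\lambda (\tilde h')^2\,ds=\int_{A^c}(h')^2\,ds\le \int_0^\lambda (h')^2\,ds,$$
which moreover shows that $\tilde h\in H^1([0,\lambda])$ with $\tilde h(0)=0$, so $\tilde h$ is an admissible test function.

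The main technical point will be the absolute continuity of $\tilde h$, needed to legitimise both the density-point identification above and the membership $\tilde h\in H^1$. I would prove it directly: for disjoint intervals $(s_i,t_i)\subset[0,\lambda]$ with $\tilde h(t_i)>\tilde h(s_i)$, pick $u_i\in(s_i,t_i]$ realising $\max_{[0,t_i]}h=\tilde h(t_i)$; using $\tilde h(s_i)\ge h(s_i)$ one gets $\tilde h(t_i)-\tilde h(s_i)\le h(u_i)-h(s_i)$. The intervals $(s_i,u_i)$ remain pairwise disjoint and have total length at most $\sum(t_i-s_i)$, so absolute continuity of $\tilde h$ descends from that of $h$. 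Once AC is secured, (a) and (b) together complete the reduction to non-decreasing $C^0$ representatives.
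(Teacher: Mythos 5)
Your proof is correct, but it takes a genuinely different route from the paper's. The paper removes one decreasing interval $(\alpha,\beta)$ at a time by reflecting the graph there ($\bar h=2h(\alpha)-h(s)$ on $(\alpha,\beta)$, shifted upward by $2[h(\alpha)-h(\beta)]$ afterwards): this keeps $\int(\bar h')^2=\int(h')^2$ exactly and gives $\bar h\ge h$ pointwise, but it only treats variations with finitely many monotonicity intervals, and the general case is then dispatched by invoking the density of piecewise monotone functions in $H^1$. You instead replace $h$ (after the reduction to $h\ge 0$, which the paper also performs just before the lemma) by its running maximum $\tilde h$, which is non-decreasing for \emph{every} $H^1$ function in one stroke, so no density argument is needed; the price is the verification that $\tilde h$ is absolutely continuous, that $\tilde h'=0$ a.e.\ on the open set $\{h<\tilde h\}$, and that $\tilde h'=h'$ a.e.\ on $\{h=\tilde h\}$ --- all of which you supply correctly, including the key observation that the maximum over $[0,t_i]$ must be attained in $(s_i,t_i]$ whenever $\tilde h(t_i)>\tilde h(s_i)$. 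Your version is arguably more robust: the paper's density step tacitly requires checking that the strict inequality \eqref{3} survives passage to the limit, whereas your construction transfers the inequality directly from $\tilde h$ to $h$ via $\int(\tilde h')^2\le\int(h')^2$ and $\int(1-s)\tilde h^2\ge\int(1-s)h^2$. What the paper's reflection buys in exchange is complete elementarity and exact preservation of the Dirichlet integral rather than a mere decrease.
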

 
\begin{proof*}
\normalfont Let us suppose that $h$ has a $C^0$ representative (which we for simplicity will also denote by $h$) that decreases in a sub-interval $(\alpha,\beta)$ of $(0,\lambda)$. We can then define $\bar{h}\in C^0$ as

\begin{equation}
\bar{h}:= 
\begin{cases}
h & \forall s\in (0,\alpha)\\
2h(\alpha)-h(s) & \forall s\in (\alpha,\beta)\\
h(s)+2[h(\alpha)-h(\beta)] & \forall s \in (\beta,\lambda)
\end{cases}
\end{equation}
It is easily seen that $\bar{h} \in H^1$ and that it is non negative and non decreasing in $(\alpha,\beta)$. This implies that, for $s\in(\alpha,\beta)$, 

\begin{equation}
\bar{h}^2(s)=4[h(\alpha)]^2+h^2(x)-4h(\alpha)h(s) \ge h^2(s)
\end{equation} 
and since obviously $\bar{h}'=h'$, one gets

\begin{equation}
\int_{\alpha}^{\beta}(\bar{h}')^2ds>b\int_{\alpha}^{\beta}(1-s)\bar{h}^2ds \Longrightarrow \int_{\alpha}^{\beta}(h')^2ds>b\int_{\alpha}^{\beta}(1-s)h^2ds
\end{equation} 
from which, observing that $\bar{h}\ge h$ in $(\beta,\lambda)$ and that piecewise monotonic functions are dense  in $H^1$, the thesis follows.

\qed

\noindent We can now complete the proof of Proposition \ref{stability}.
For $H^1$ functions $h$ that vanish in only one of the extrema of a real interval, a weak version of Poincar\'e inequality holds\footnote{\noindent This is easily seen applying the classical Poincar\'e inequality to $\tilde{h}(x):=h(x)$ for $x\in [0,\lambda]$ and $h(2\lambda-x)$ for $x \in (\lambda, 2\lambda]$.}, namely there exists $C$ such that

\begin{equation}
\|h\|_{{L^p}_{(0,\lambda)}} \le C\|h'\|_{{L^p}_{(0,\lambda)}}
\end{equation}
which, for $p=2$ provides

\begin{equation}
\left(\int_0^{\lambda}h^2ds\right)^{\frac{1}{2}}\le C\left(\int_0^{\lambda}(h')^2ds\right)^{\frac{1}{2}}
\end{equation}
where $\frac{2\lambda}{\pi}$ is the optimal value for the constant\footnote{The optimal Poincar\'e constant would be simply $\frac{\lambda}{\pi}$ for functions vanishing in both extrema of the interval.}. We have thus

\begin{equation}
\int_0^{\lambda}h^2ds\le \frac{4l^2}{\pi^2}\int_0^{\lambda}(h')^2ds
\end{equation}
Therefore, \eqref{3} holds if so does

\begin{equation}
\frac{\pi^2}{4{\lambda}^2}\int_0^{\lambda}h^2ds>b\int_0^\lambda(1-s)h^2ds
\label{5}
\end{equation}
Since $1-s$ is decreasing, and recalling that $h$ is non decreasing, by Chebyshev's integral inequality we have

\begin{equation}
\int_0^\lambda(1-s)h^2ds \le \int_0^\lambda(1-s)ds\int_0^\lambda h^2ds
\end{equation}
Therefore, \eqref{5} holds if

\begin{equation}
\frac{\pi^2}{4{\lambda}^2}\int_0^{\lambda}h^2ds>b\int_0^\lambda(1-s)ds\int_0^\lambda h^2ds
\end{equation}
Solving the previous inequality with respect to $b$, we get

\begin{equation}
b<\frac{\pi^2}{4{\lambda}^2}\left(\lambda-\frac{\lambda^2}{2}\right)^{-1}=\frac{\pi^2}{2\lambda^3(2-\lambda)}
\end{equation}
which completes the proof.
\end{proof*}

\qed

\noindent We want now to establish a check on instability. Let $\theta$ be a stationary point for $\mathcal{E}$. We will prove the following

\begin{prop}
\label{unstable}
Let us suppose that there exist $\lambda$, $\mu$, $\nu$ such that $0<\lambda<\mu<\nu<1$, and that $\sin\theta>0$ in $(0,\lambda)$ and $\sin\theta<0$ in $(\mu,\nu)$. Then $\theta$ is not a minimizer for $E$ if\footnote{It is actually easy to get rid of $\mu$ and obtain a similar result supposing that $\sin\theta$ simply changes its sign at $\nu$; however, we prefer the current formulation in order to be able to apply the inequality when the zero of $\sin\theta$ is known with an error (that can be controlled).} $$b>\frac{12}{\lambda(3\lambda^2-16\lambda+12)\int_\mu^\nu|\sin\theta| ds}$$    
\end{prop}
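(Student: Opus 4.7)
The plan is to show that $\theta$ is not a minimizer by exhibiting an admissible variation $h\in H^1_{0,1}$ for which the second variation $\mathcal{V}[h]$ defined in \eqref{1b} is strictly negative. Since $\theta$ is a stationary point, $E(\theta+th) = E(\theta) + (t^2/2)\mathcal{V}[h] + o(t^2)$, so $\mathcal{V}[h]<0$ immediately gives $E(\theta+th)<E(\theta)$ for small $t>0$, contradicting the minimizing property.

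The natural strategy is to concentrate $h$ on the region $(\mu,\nu)$ where $\sin\theta<0$ (so that $b(1-s)\sin\theta\, h^2$ contributes negatively) while keeping $h$ small on $(0,\lambda)$ where $\sin\theta>0$ would contribute positively. The simplest choice compatible with $h(0)=0$ is the piecewise linear bump
\begin{equation}
h(s)=\begin{cases} s/\lambda, & s\in[0,\lambda],\\ 1, & s\in[\lambda,1],\end{cases}
\end{equation}
for which $\int_0^1 (h')^2\,ds = 1/\lambda$ (elastic contribution concentrated on $[0,\lambda]$). I would then split the potential term of $\mathcal{V}[h]$ as an integral over $(0,\lambda)$, $(\lambda,\mu)$, $(\mu,\nu)$ and $(\nu,1)$, and bound each piece separately.

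On $(0,\lambda)$ I would use $0<\sin\theta\le 1$ and $h^2=(s/\lambda)^2$ to get the upper bound $b\int_0^\lambda (1-s)(s/\lambda)^2\,ds = b(\lambda/3 - \lambda^2/4)$. On $(\mu,\nu)$ I would use $h=1$ and $\sin\theta=-|\sin\theta|$ to retain the negative dominant term $-b\int_\mu^\nu(1-s)|\sin\theta|\,ds$, which is the quantity driving the instability. On the remaining intervals $(\lambda,\mu)$ and $(\nu,1)$, where the sign of $\sin\theta$ is not controlled by hypothesis, I would use the crude bound $|\sin\theta|\le 1$ together with $h^2=1$ to absorb those contributions into a $\lambda$-only correction term. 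Collecting these estimates gives an upper bound of the form
\begin{equation}
\mathcal{V}[h]\le \frac{1}{\lambda} + b\,\Phi(\lambda) - b\,\Psi(\lambda)\int_\mu^\nu|\sin\theta|\,ds,
\end{equation}
with $\Phi,\Psi$ explicit. Requiring the right-hand side to be strictly negative and solving for $b$ produces the sufficient condition stated in the proposition.

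The main technical obstacle is performing the estimates on the uncontrolled-sign intervals $(\lambda,\mu)$ and $(\nu,1)$ carefully enough that the resulting inequality on $b$ has the clean form $b>12/[\lambda(3\lambda^2-16\lambda+12)\int_\mu^\nu|\sin\theta|\,ds]$, with $\mu$ and $\nu$ dropping out of the final bound. In particular, one must be careful to use $(1-s)\le 1$ and $|\sin\theta|\le 1$ in the right direction on each subinterval, so that the positive correction term collapses to a function of $\lambda$ alone, while the negative contribution on $(\mu,\nu)$ is retained in its useful $\int_\mu^\nu|\sin\theta|\,ds$ form. After this bookkeeping, the final rearrangement is routine algebra.
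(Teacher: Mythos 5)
Your test function and overall architecture coincide with the paper's proof: the paper takes exactly the ramp-then-plateau variation $\bar h(s)=\frac{M}{\lambda}s$ on $(0,\lambda)$, $\bar h=M$ on $(\lambda,1)$ (your $M=1$; the paper notes the bound is independent of $M$), computes $\mathcal V[\bar h]=\frac{M^2}{\lambda}+b(\frac{M}{\lambda})^2\int_0^\lambda s^2(1-s)\sin\theta\,ds+bM^2\int_\lambda^1(1-s)\sin\theta\,ds$, bounds the $(0,\lambda)$ contribution by $b\left(\frac{\lambda}{3}-\frac{\lambda^2}{4}\right)$ exactly as you propose, and retains the driving negative term $-b\int_\mu^\nu(1-s)|\sin\theta|\,ds$. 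Up to the treatment of the intervals $(\lambda,\mu)$ and $(\nu,1)$, your proposal is the paper's argument.

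The gap is precisely in that treatment, which you correctly flag as the ``main technical obstacle'' but then resolve in a way that cannot deliver the stated constant. If you bound $\sin\theta$ on $(\lambda,\mu)\cup(\nu,1)$ by $|\sin\theta|\le 1$ you pick up a strictly positive correction $b\int_{(\lambda,\mu)\cup(\nu,1)}(1-s)\,ds$; this does not collapse to a function of $\lambda$ alone (it depends on $\mu,\nu$, and over-estimating it by $b(1-\lambda)^2/2$ only makes it larger), and any positive correction added to the left side of $\frac{1}{\lambda}+b\left(\frac{\lambda}{3}-\frac{\lambda^2}{4}\right)<b\,(1-\lambda)\int_\mu^\nu|\sin\theta|\,ds$ produces a threshold on $b$ strictly larger than $\frac{12}{\lambda(3\lambda^2-16\lambda+12)\int_\mu^\nu|\sin\theta|\,ds}$. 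The paper obtains the stated constant by passing directly to an inequality in which only $(0,\lambda)$ and $(\mu,\nu)$ appear, i.e.\ by tacitly discarding the contribution of $(\lambda,\mu)\cup(\nu,1)$ as non-positive --- an assumption not literally contained in the hypotheses (the footnote, which envisages $\sin\theta$ changing sign once near $\nu$, indicates the intended regime). So to arrive at the inequality as stated you must either drop those terms as the paper does (making the implicit sign assumption explicit), or accept a weaker, $\mu,\nu$-dependent sufficient condition; your crude absolute-value bound, carried out honestly, proves a different proposition with a different constant.
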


\begin{proof*}
\normalfont We consider a variation $\bar{h}(s)$ defined as:
\begin{equation}
\bar{h}:= 
\begin{cases}
\frac{M}{\lambda}s & \forall s\in (0,\lambda)\\
M & \forall s \in (\lambda,1)
\end{cases}
\end{equation}

\noindent Let us compute $\mathcal{V}$ for this variation\footnote{We will see that the inequality is not depending on the value $M$.}

\begin{equation}
\mathcal{V}[\bar{h}]=\frac{M^2}{\lambda}+b\left(\frac{M}{\lambda}\right)^2\int_0^\lambda s^2(1-s)\sin\theta ds+ bM^2\int_\lambda^1(1-s)\sin\theta ds
\end{equation}
Recalling the hypotheses on the sign of $\sin\theta$, $\mathcal{V}$ is negative if

\begin{equation}
\frac{M^2}{\lambda}+b\left(\frac{M}{\lambda}\right)^2\int_0^\lambda s^2(1-s) ds < bM^2\int_\mu^\nu(1-s)|\sin\theta| ds
\end{equation}
that is

\begin{equation}
\frac{M^2}{\lambda}+bM^2\left(\frac{\lambda}{3}-\frac{\lambda^2}{4}\right) < bM^2\int_\mu^\nu(1-s)|\sin\theta| ds
\end{equation}
which holds if so does

\begin{equation}
\frac{1}{\lambda}+b\left(\frac{\lambda}{3}-\frac{\lambda^2}{4}\right) < b(1-\lambda)\int_\mu^\nu|\sin\theta| ds
\end{equation}
Solving for $b$, one gets

\begin{equation}
b>\frac{12}{\lambda(3\lambda^2-16\lambda+12)\int_\mu^\nu|\sin\theta| ds}
\label{44}
\end{equation}
\end{proof*}

\qed

\noindent We remark that the right hand side of this inequality involves a function depending on $b$. In this case, contrarily to the previous one, in order to apply the check one needs not only to know the behavior of the sign of $\sin\theta$, but also an estimate of the integral of its negative part.

\section*{6. Some numerical results}

The previous results can be applied to check the stability character of a stationary point $\theta$, when they are not obtained as local/global minimizers. We will apply them in particular to a solution only found numerically, which will be described below.   

\noindent To solve numerically the boundary value problem \eqref{1}, \eqref{bc} and \eqref{bcbis} we used a standard shooting technique. Specifically, we first solved a set $S$ of initial value problems with suitably parametrized initial data, i.e. the problem given by \eqref{1}, \eqref{bcbis} and $\th'(0)=K$.
Next we searched for the solution of the boundary value problem among the solutions of $S$. The non linear second order Chauchy  problem has been solved using the built in function of \textit{Mathematica} NDSolve for the numerical solution of differential equations.
This function can use different methods for solving differential equations, but basically all of them employ an adaptive step size in order to stay within a certain error range. The method for the solution is automatically chosen taking into account different properties of the problem, but it is mainly based on the evaluation of its stiffness. NDSolve returns a  function interpolating the values evaluated at the  extrema of the steps. Parameterizing the Chauchy problem with respect to $\theta'(0)=K$, with $-40\le K \le 40$, we have determined the values $\theta'(1,K,b)$.  The error on these values depends of course on the step size chosen for $K$. {Specifically, by means of the option \textit{AccuracyGoal} of package NDSolve, we can control the desired final output up to an (adimensional) absolute error chosen \textit{a priori} (we selected $10^{-8}$); this was done for the estimate of all the quantities involved in the numerical check of the inequalities, i.e. $\th$, $\sin\th$ and $\int\sin\th$. }

\noindent With $b=60$, numerical simulations show that there is a solution $\bar{\theta}(s)$ such that $\sin\bar{\theta}$ is non negative in $[0.3,1]$, with an error (see above) sufficiently small to comfortably allow certainty on the sign at the left extremum of the interval. Since $\bar{\theta}$ appears in this case to be everywhere negative, the previous Proposition \ref{thstab} tells us that the solution can be a local minimizer. Indeed, since in this case $\frac{\pi^2}{2\lambda^3(2-\lambda)}\approx 107.5$, Proposition \ref{stability} implies then that this is a stable configuration of the \textit{Elastica} (not belonging to the primary branch).

\noindent The graph of $\sin\bar{\theta}$ is shown in figure \ref{sintheta1}.

\begin{figure}[H]
\centering
\includegraphics[scale=0.35]{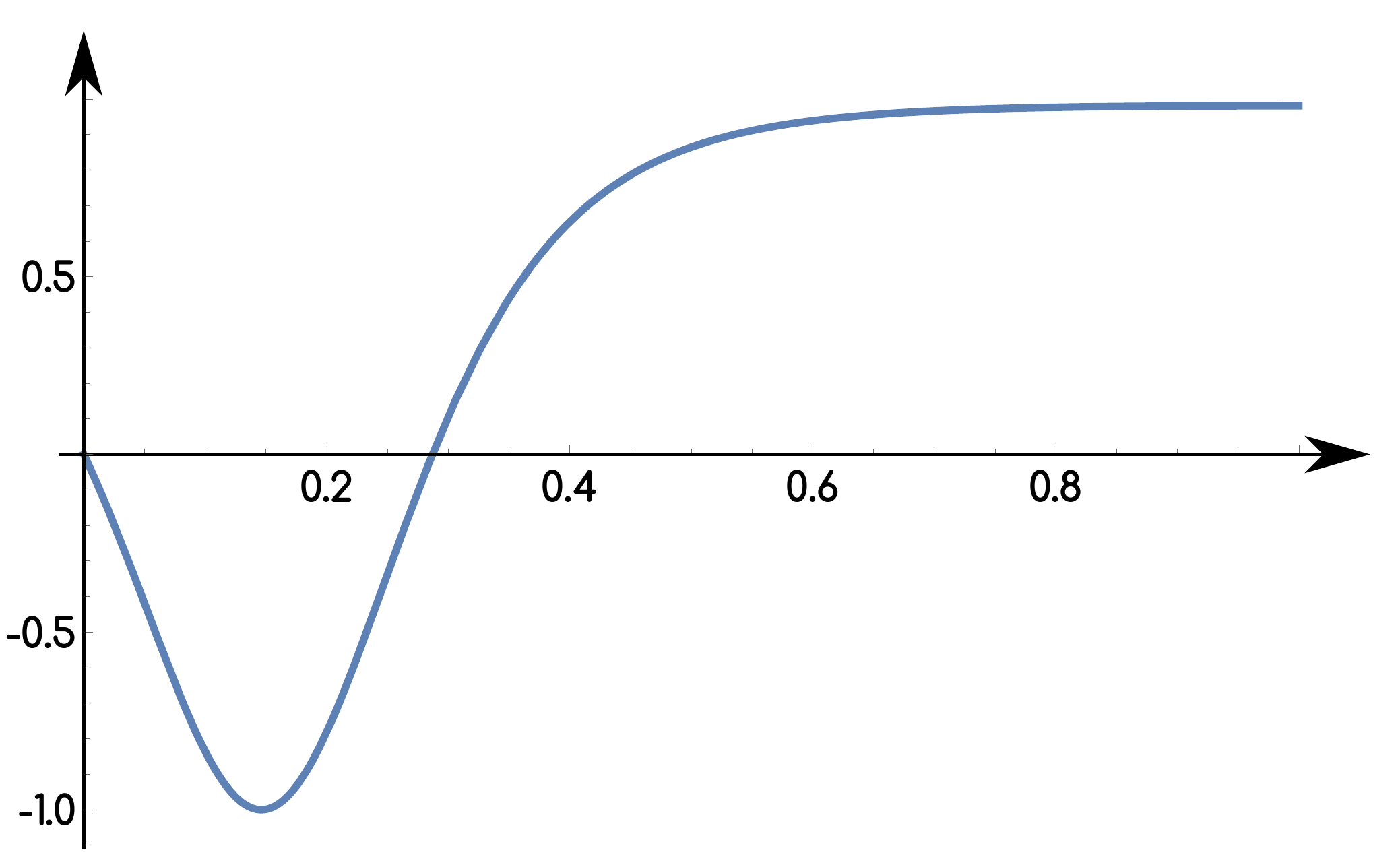}
\caption{Graph of $\sin\bar\theta$ (stationary point for $b=60$)}\label{sintheta1}
\end{figure}
 
\noindent With the same value $b=60$, another solution $\hat{\theta}(s)$ is numerically found, and simulations show that $\sin\hat{\theta}$ is positive in $(0,0.11)$ and negative in $(0.14,0.5)$ with an error (see above) sufficiently small to comfortably allow certainty on the sign at the extrema of the intervals (notice that in this case Proposition \ref{thstab} is not applicable because the solution is not everywhere negative). Numerical results show that $\int_{0.14}^{0.55}|\sin\hat{\theta}(s)|ds$ is not smaller than 0.2. Since in this case the right hand side of the inequality \eqref{44} is $\approx 53.25$, this means, by the previous Proposition \ref{unstable}, that this solution is not stable. 

\noindent The graph of $\sin\hat{\theta}$ and of $\int_0^s\sin\hat{\theta}ds$ are shown in Figs. \ref{ffffff} and \ref{gggggg}.

\begin{figure}[H]
\centering
\includegraphics[scale=0.5]{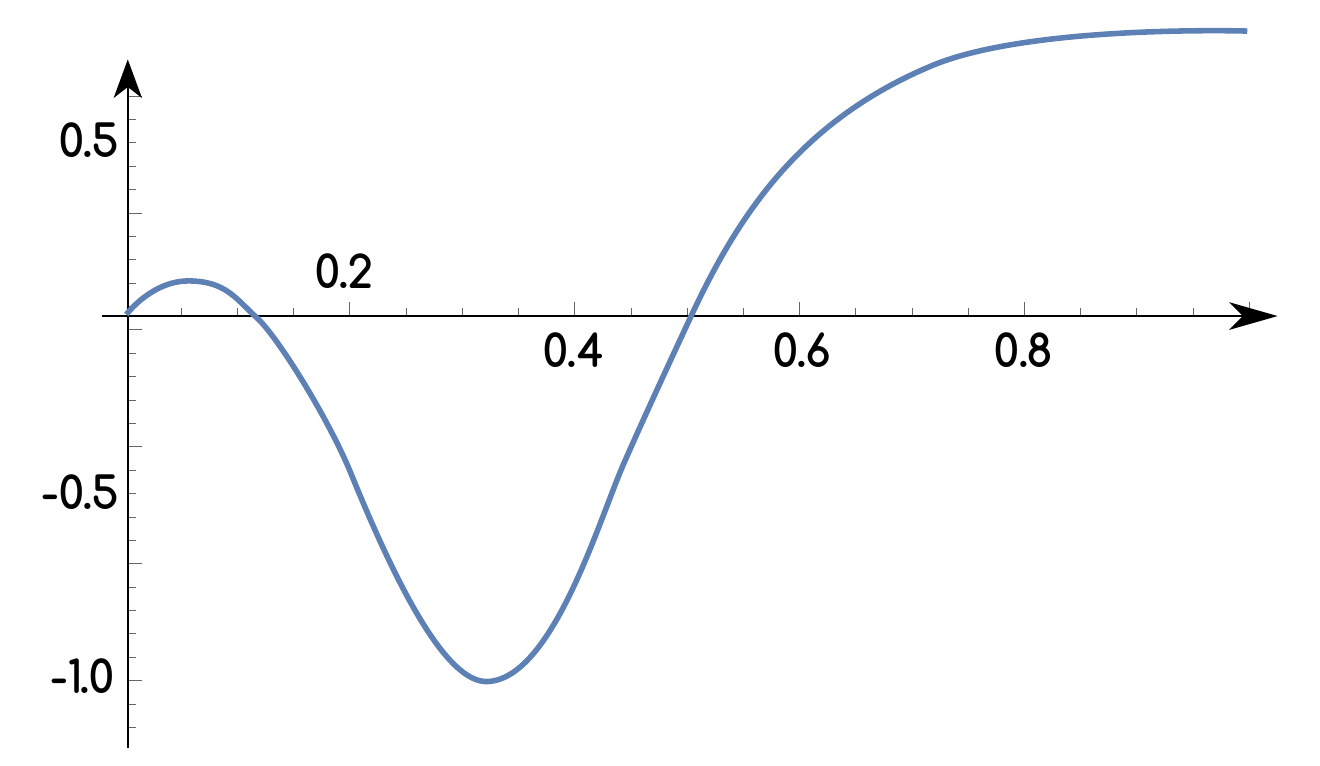}
\caption{Graph of $\sin\hat{\theta}$ (stationary point for $b=60$)}\label{ffffff}
\end{figure} 

\begin{figure}[H]
\centering
\includegraphics[scale=0.5]{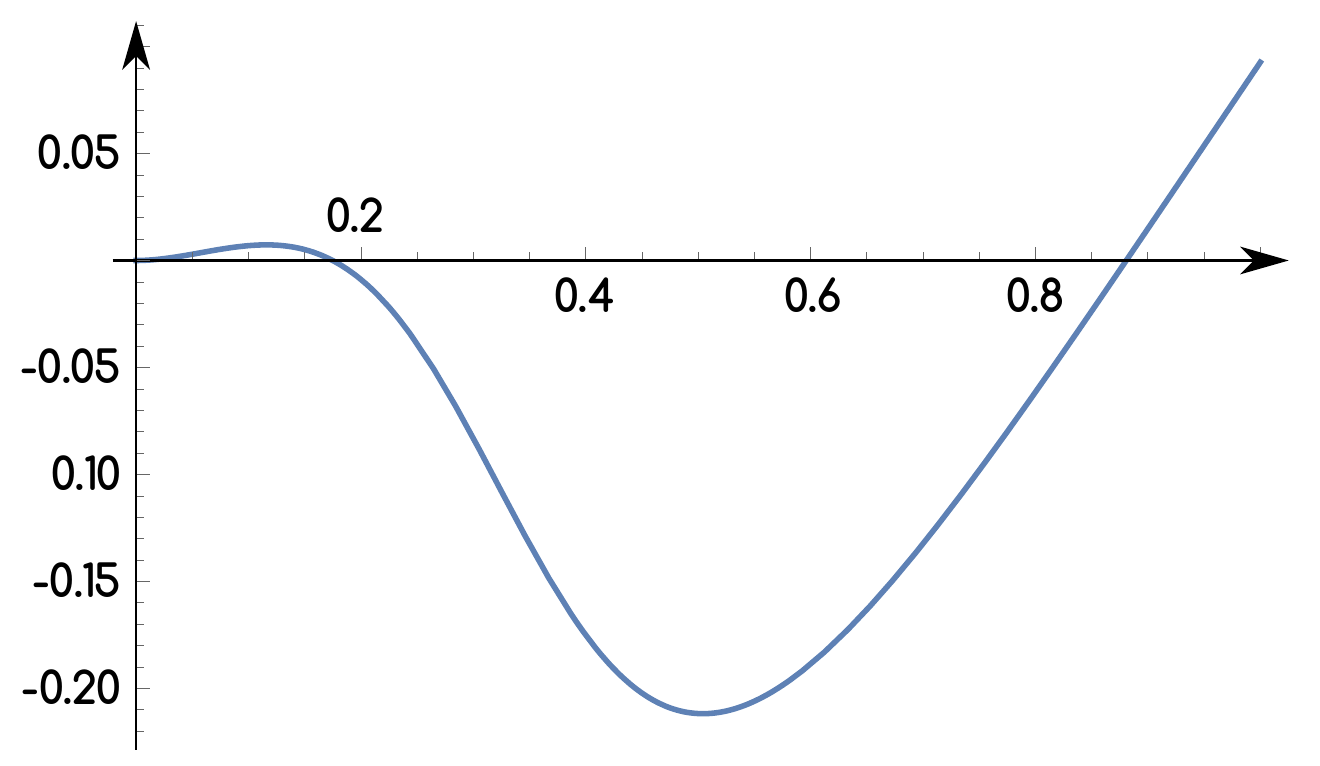} 
\caption{Graph of $\int_0^s\sin\hat{\theta}ds$}\label{gggggg}
\end{figure}

\noindent Summarizing, with $b=60$ (above the value $b\approx 41$ given in Fig.\ref{bifurcation}) we have numerical evidence of three solutions, two of which stable 
and one unstable.

\noindent The plot of the deformed shapes of the \textit{Elastica} relative to the three solutions $\bar\theta$, $\hat{\theta}$ and $\tilde\theta$ are shown in Figs. \ref{Stable}, \ref{deformatak1} and \ref{last}.

\begin{figure} [H]
\centering
\includegraphics[scale=0.3]{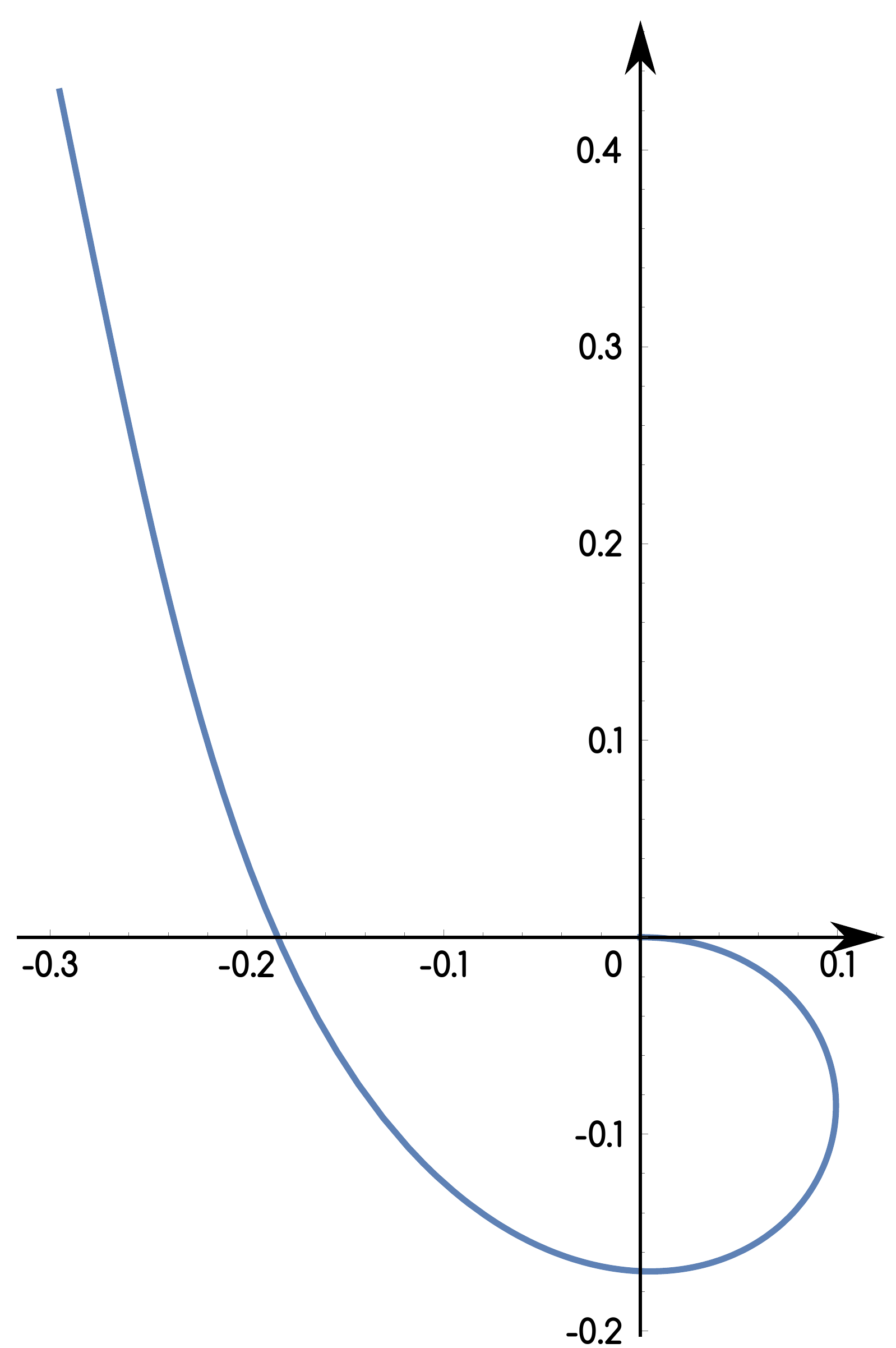}
\caption{Deformed shape of the \textit{Elastica} corresponding to $\bar\theta$}
\label{Stable}
\end{figure}

\begin{figure}[H]
\centering
\includegraphics[scale=0.3]{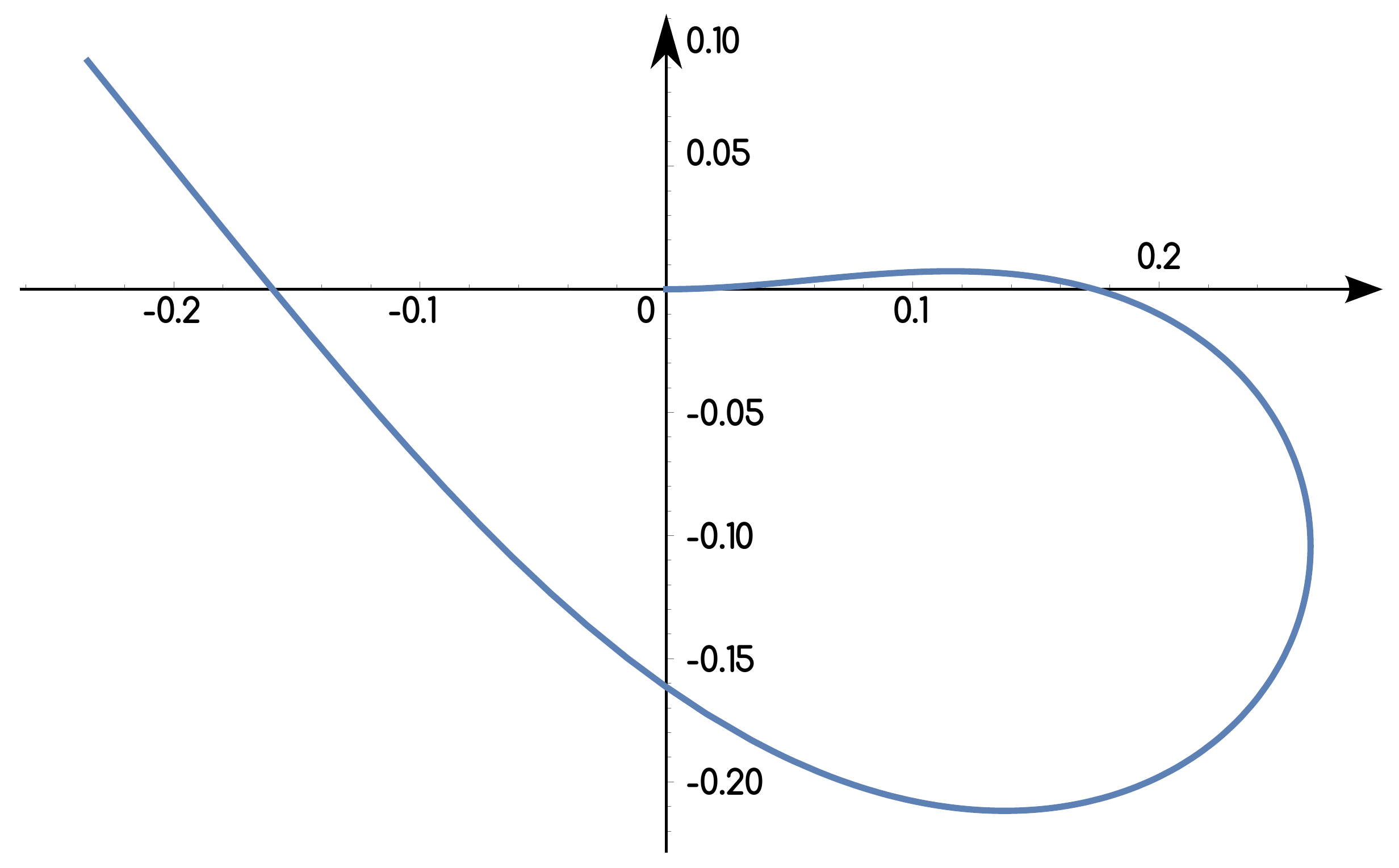} 
\caption{Deformed shape of the \textit{Elastica} corresponding to $\hat{\theta}$}\label{deformatak1}
\end{figure}

\begin{figure}[H]
\centering
\includegraphics[scale=0.4]{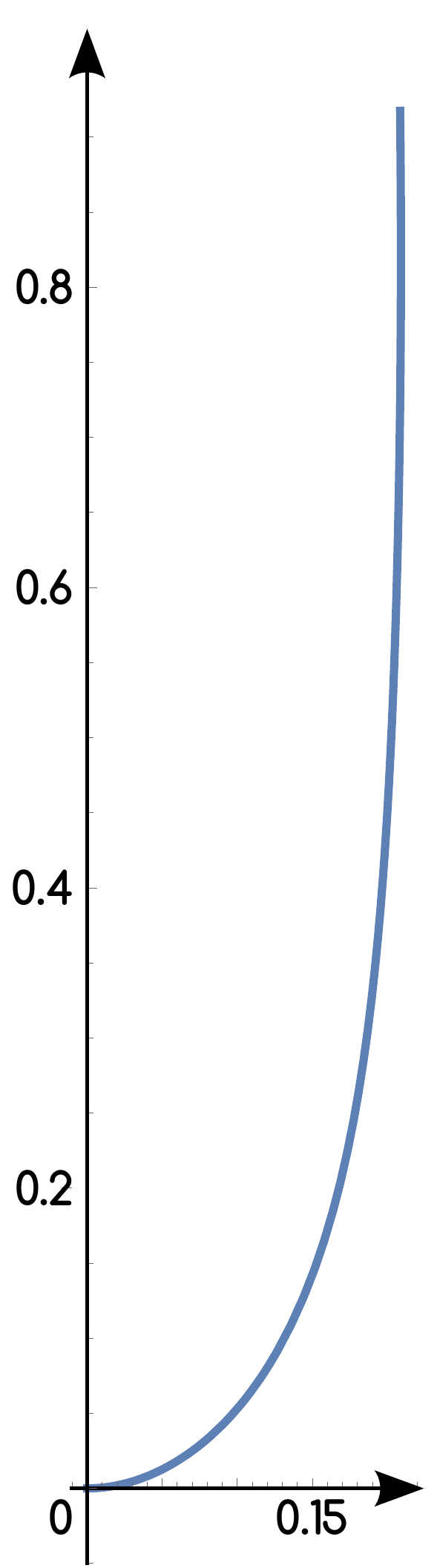} 
\caption{Deformed shape of the \textit{Elastica} corresponding to $\tilde\theta$}\label{last}
\end{figure}

\section*{7. Conclusions}

Direct method of calculus of variations provides a natural tool for the analysis of the classical problem of the \textit{Elastica} in the large deformation posed by Euler in 1744. In the present paper, we applied direct method  to establish the existence of global and local minimizers of a clamped \textit{Elastica} subjected to uniformly distributed load. We also gave sufficient conditions for stability or instability of particular classes of stationary configurations.  In particular, we proved that some equilibrium configurations found by means of numerical simulations are unstable if they exist. It seems to us that structural mechanics could significantly benefit from a wider application of the aforementioned theoretical tools. 

\noindent Several new questions have been opened within the present research, the final objective being the full characterization of the equilibria of Elastica in large deformation under a distributed load.

\vskip .5cm
\noindent\textbf{Acknowledgements.} We thank M. Ponsiglione for useful discussions.

\end{document}